\theoremstyle{plain}
\theoremstyle{plain}
\theoremstyle{plain}
\newtheorem{lem}{\protect\lemmaname}
\theoremstyle{plain}
\newtheorem{thm}{\protect\theoremname}
\theoremstyle{plain}
\theoremstyle{definition}
\theoremstyle{definition}
\theoremstyle{definition}
\newtheorem{rem}{\protect\remarkname}
\providecommand{\claimname}{Claim}
\providecommand{\lemmaname}{Lemma}
\providecommand{\propositionname}{Proposition}
\providecommand{\theoremname}{Theorem}
\providecommand{\corollaryname}{Corollary} 
\providecommand{\definitionname}{Definition}
\providecommand{\assumptionname}{Assumption}
\providecommand{\remarkname}{Remark}
\newcommand{\overbar}[1]{\mkern 1.25mu\overline{\mkern-1.25mu#1\mkern-0.25mu}\mkern 0.25mu}
\DeclareMathOperator*{\argmax}{arg\,max}
\newcommand{\openone}{\mathds{1}}
\newcommand{\lammin}{\lambda_{\mathrm{min}}}
\newcommand{\lammax}{\lambda_{\mathrm{max}}}
\newcommand{\Gbar}{\overbar{G}}
\newcommand{\qmax}{q_{\mathrm{max}}}
\newcommand{\dmax}{d_{\mathrm{max}}}
\newcommand{\pebar}{\overbar{P}_{\mathrm{e}}}
\newcommand{\pe}{P_{\mathrm{e}}}
\newcommand{\Xv}{\mathbf{X}}
\newcommand{\Ec}{\mathcal{E}}
\newcommand{\Gc}{\mathcal{G}}
\newcommand{\Tc}{\mathcal{T}}
\newcommand{\EE}{\mathbb{E}}
\newcommand{\PP}{\mathbb{P}}
\newcommand{\mbar}{\overbar{m}}
\begin{document}

\title{On the Difficulty of Selecting Ising \\ Models with Approximate Recovery}

\author{Jonathan Scarlett and Volkan Cevher}
\maketitle

\begin{abstract}
    In this paper, we consider the problem of estimating the underlying graph associated with an Ising model given a number of independent and identically distributed samples.  We adopt an \emph{approximate recovery} criterion that allows for a number of missed edges or incorrectly-included edges, in contrast with the widely-studied exact recovery problem.  Our main results provide information-theoretic lower bounds on the sample complexity for graph classes imposing constraints on the number of edges, maximal degree, and other properties.  We identify a broad range of scenarios where, either up to constant factors or logarithmic factors, our lower bounds match the best known lower bounds for the exact recovery criterion, several of which are known to be tight or near-tight.  Hence, in these cases, approximate recovery has a similar difficulty to exact recovery in the minimax sense.
    
    Our bounds are obtained via a modification of Fano's inequality for handling the approximate recovery criterion, along with suitably-designed ensembles of graphs that can broadly be classed into two categories: (i) Those containing graphs that contain several isolated edges or cliques and are thus difficult to distinguish from the empty graph; (ii) Those containing graphs for which certain groups of nodes are highly correlated, thus making it difficult to determine precisely which edges connect them.  We support our theoretical results on these ensembles with numerical experiments.
\end{abstract}
\begin{IEEEkeywords}
    Graphical model selection, Ising model, Gaussian graphical models, Markov random fields, information-theoretic limits, lower bounds, Fano's inequality.
\end{IEEEkeywords}

\long\def\symbolfootnote[#1]#2{\begingroup\def\thefootnote{\fnsymbol{footnote}}\footnote[#1]{#2}\endgroup}

\symbolfootnote[0]{ The authors are with the Laboratory for Information and Inference Systems (LIONS), \'Ecole Polytechnique F\'ed\'erale de Lausanne (EPFL) (e-mail: \{jonathan.scarlett,volkan.cevher\}@epfl.ch).

This work was supported in part by the European Commission under Grant ERC Future Proof, SNF 200021-146750 and SNF CRSII2-147633, and EPFL Fellows Horizon2020 grant 665667.}
\vspace*{-0.5cm}

\section{Introduction}

Graphical models are a widely-used tool for providing compact representations of the conditional independence relations between random variables, and arise in areas such as image processing \cite{Gem84}, statistical physics \cite{Gla63}, computational biology \cite{Dur98}, natural language processing \cite{Man99}, and social network analysis \cite{Was94}.  The problem of \emph{graphical model selection} consists of recovering the graph structure given a number of independent samples from the underlying distribution.  

While this fundamental problem is NP-hard in general \cite{Chi96}, there exist a variety of methods guaranteeing \emph{exact} recovery with high probability on \emph{restricted} classes of graphs, such as bounded degree and bounded number of edges.  Existing works have focused primarily on Ising models and Gaussian models, and our focus in this paper is on the former.

In particular, we focus in the problem of \emph{approximate recovery}, in which one can tolerate some number of missed edges or incorrectly-included edges.  The motivation for such a study is that the exact recovery criterion is very restrictive, and not something that one would typically expect to achieve in practice.  In particular, if the number of samples required for exact recovery is very large, it is of significant interest to know the potential savings by allowing for approximate recovery.  The answer is unclear \emph{a priori}, since this can lead to vastly improved scaling laws in some inference and learning problems \cite{Ree12} and virtually no gain in others \cite{Sca15b}.


Our main focus is on algorithm-independent lower bounds for Ising models, revealing the number of measurements required for approximate recovery regardless of the computational complexity.  We extend Fano's inequality \cite{San12,Sha14} to the case of approximate recovery, and apply it to restricted sets of graphs that prove the difficulty of approximate recovery.  

Our main results reveal a broad range of graph classes for which the approximate recovery lower bounds exhibit the same scalings as the best-known exact recovery lower bounds \cite{San12,Sha14}, which are known to be tight or near-tight in many cases of interest.  This indicates that, at least for the classes that we consider, the approximate recovery problem is not much easier than the exact recovery problem in the minimax sense.  

\subsection{Problem Statement} \label{sec:SETUP}

The ferromagnetic Ising model \cite{Isi25} is specified by a graph $G = (V,E)$ with vertex set $V = \{1,\dotsc,p\}$ and edge set $E$.  Each vertex is associated with a binary random variable $X_i \in \{-1,1\}$, and the corresponding joint distribution is
\begin{equation}
    P_{G}(x) = \frac{1}{Z}\exp\bigg( \sum_{i,j} \lambda_{ij} x_i x_j \bigg), \label{eq:Ising}
\end{equation}
where
\begin{equation}
    \lambda_{ij} = \begin{cases} \lambda & (i,j) \in E \\ 0 & \mathrm{otherwise}, \end{cases}     \label{eq:lambda}
\end{equation}
and $Z$ is a normalizing constant called the partition function. Here $\lambda > 0$ is a parameter to the distribution, sometimes called the inverse temperature.

Let $\Xv \in \{0,1\}^{n \times p}$ be a matrix of $n$ independent samples from this distribution, each row corresponding to one such sample of the $p$ variables. Given $\Xv$, an \emph{estimator} or \emph{decoder} constructs an estimate $\hat{G}$ of the graph $G$, or equivalently, an estimate $\hat{E}$ of the edge set $E$.

\textbf{Recovery Criterion:} Given some class $\Gc$ of graphs, the widely-studied exact recovery criterion seeks to characterize
\begin{equation}
    \pe := \max_{G \in \Gc} \PP[\hat{E} \ne E]. \label{eq:pe}
\end{equation}
We instead consider the following approximate recovery criterion, for some maximum number of errors $\qmax \ge 0$:
\begin{equation}
    \pe(\qmax) := \max_{G \in \Gc} \PP\big[|E \Delta \hat{E}| > \qmax\big],  \label{eq:pe_approx}
\end{equation}
where $E \Delta \hat{E} = (E \backslash \hat{E}) \cup (\hat{E} \backslash E)$, so that $|E \Delta \hat{E}|$ denotes the \emph{edit distance}, i.e., the number of edge insertions and deletions required to transform one graph to another.  In this definition, $\qmax$ does not depend on $G$, and hence, the number of allowed edge errors does not depend on the graph itself.  We consider graph classes with a maximum number of edges equal to some value $k$, and set $\qmax = \theta^{*} k$ for some constant $\theta^* \in (0,1)$ not scaling with the problem size.  Note that $\theta^* = 1$ would trivially give $\pe(\qmax) = 0$.

\textbf{Graph Classes:} We consider the following three nested classes of graphs $\Gc_k \supseteq \Gc_{k,d} \supseteq \Gc_{k,d,\eta,\gamma}$:
\begin{itemize}
    \item \emph{(Edge bounded class $\Gc_k$)} This class contains all graphs with at most $k$ edges.
    \item \emph{(Edge and degree bounded class $\Gc_{k,d}$)} This class contains the graphs in $\Gc_k$ such that each node has degree (i.e., number of edges it is involved in) at most $d$.
    \item \emph{(Sparse separator class $\Gc_{k,d,\eta,\gamma}$)} This class contains the graphs in $\Gc_{k,d}$ satisfying the \emph{$(\eta,\gamma)$-separation condition} \cite{Ana12}: For any two non-connected vertices in the graph, one can simultaneously block all paths of length $\gamma$ or less by blocking at most $\eta$ nodes.
\end{itemize}
The restriction on the number of edges is motivated by the fact that real-world graphs are often sparse.  The restriction on the degree is also relevant in applications, and is particularly commonly-assumed in the statistical physics literature.  The sparse separation condition is somewhat more technical, but it is of interest since it is known to permit polynomial-time exact recovery in many cases \cite{Ana12,Ana12a}.  Moreover, it is known to hold with high probability for several interesting random graphs; see \cite{Ana12} for some examples.

\textbf{Generalized Edge Weights:} A generalization of the above Ising model allows $\lambda_{ij}$ to take different non-zero values for each $(i,j) \in E$, some of which may be negative.  Previous works considering model selection for this generalized model have sought minimax bounds with respect to the graph class \emph{and} these parameters subject to $\lammin \le |\lambda_{ij}| \le \lammax$ for some $\lammin$ and $\lammax$.  The lower bounds derived in this paper immediately imply corresponding lower bounds for this generalized setting, provided that our parameter $\lambda$ in \eqref{eq:lambda} lies in the range $[\lammin,\lammax]$.

\textbf{Notation and Terminology:} Throughout the paper, we let $\PP_G$ and $\EE_G$ denote probabilities and expectations with respect to $P_G$ (e.g., $\PP_G[X_i = X_j]$, $\EE[X_iX_j]$).  We denote the floor function by $\lfloor\cdot\rfloor$, and the ceiling function by $\lceil\cdot\rceil$.  We use the standard terminology that the \emph{degree} of a node $v \in V$ is the number of edges in $E$ containing $v$, and that a \emph{clique} is a subset $C \subset V$ of size at least two within which all pairs of nodes have an edge between them.

\subsection{Related Work}

A variety of algorithms with varying levels of computational efficiency have been proposed for selecting Ising models with rigorous guarantees, including conditional independence tests for candidate neighborhoods \cite{Bre08}, correlation tests in the presence of sparse separators \cite{Ana12,Wu13}, greedy techniques \cite{Jal11,Ray12,Bre14,Bre14a}, convex optimization approaches \cite{Rav10}, elementary estimators \cite{Yan14}, and intractable information-theoretic techniques \cite{San12}.  

These works have made various assumptions on the underlying model, including incoherence assumptions \cite{Rav10,Yan14} and long-range correlation assumptions \cite{Ana12,Wu13}.  A notable recent work avoiding these is \cite{Bre14a}, which provides recovery guarantees using an algorithm whose complexity is only quadratic in the number of nodes for a fixed maximum degree, thus resolving an open question posed in \cite{Mon09}.

Early works providing algorithm-independent lower bounds used only graph-theoretic properties \cite{Bre08,Ana12,Tan13}; the resulting bounds are loose in general, since they do not capture the effects of the parameters of the joint distribution (e.g., $\lambda$).  Several refined bounds were given in \cite{San12} for graphs with a bounded degree or a bounded number of edges.  Additional classes were considered in \cite{Sha14}, including the bounded girth class and a class related to the separation criterion of \cite{Ana12} (and hence related to $\Gc_{k,d,\eta,\gamma}$ defined above).  While our techniques build on those of \cite{San12,Sha14}, we must consider significantly different ensembles, since those in \cite{San12,Sha14} contain graphs that differ only by one or two edges, thus making approximate recovery trivial. 

To our knowledge, the only other work giving an approximate recovery bound for the Ising model is \cite{Das12}, where the degree-bounded class is considered.  The effect of edge weights is not considered therein, and the bound is proved by counting graphs rather than constructing restricted ensembles.  Consequently, only an $\Omega(d \log p)$ necessary condition is shown, in contrast with our bounds containing a $d^2$ or $e^{\lambda d}$ term (\emph{cf.}, Table \ref{tbl:summary}).  The necessary conditions for list decoding \cite{Vat11} bear some similarity to approximate recovery, but the problem and its analysis are in fact much more similar to exact recovery, allowing the ensembles from \cite{San12,Sha14} to be applied directly.
 
Beyond Ising models, several works have provided necessary and sufficient conditions for recovering Gaussian graphical models \cite{Mei06,Wan10,Ana12a,Jog15,Rav11}.  In this context, a necessary condition for approximate recovery was given in \cite[Cor.~7]{Ana12a}, but the corresponding assumptions and techniques used were vastly different to ours: The random Erd\"os-R\'enyi model was considered instead of a deterministic class, and an additional walk-summability condition specific to the Gaussian model was imposed. 

\subsection{Contributions} \label{sec:CONTIRBUTIONS}

Our main results, and the corresponding existing results for exact recovery, are summarized in Table \ref{tbl:summary}, where we provide necessary scaling laws on the number of samples needed to obtain a vanishing probability of error $\pe(\qmax)$.  Note that some of the exact recovery conditions given in the final column were not explicitly given in \cite{San12,Sha14}, but they can easily be inferred from the proofs therein; see Section \ref{sec:RESULTS} for further discussion.  We also observe that our analysis requires handling more cases separately compared to \cite{San12,Sha14}; in those works, the final three rows corresponding to $\Gc_k$ in Table \ref{tbl:summary} are all a single case giving $\Omega(k \log p)$ scaling, and similarly for $\Gc_{k,d}$.

\begin{table*}
    \begin{centering}
    \begin{tabular}{|>{\centering}m{4.2cm}|>{\centering}m{3.6cm}|>{\centering}m{3.8cm}|>{\centering}m{3.2cm}|}
    \hline 
    Graph Class  & Parameters  & Necessary for approximate recovery (this paper) & Best known necessary for exact recovery \cite{San12,Sha14}\tabularnewline
    \hline 
    \hline 
    \multirow{4}{4.2cm}{ {\centering \\ ~ \\
    \textbf{Bounded edge $\mathcal{G}_{k}$} \\
    \smallskip
    Distortion $q_{\mathrm{max}} < \frac{k}{4}$ \\
    \smallskip
    (Theorems~\ref{thm:necc_k1} and \ref{thm:necc_k2}) \\ }} & $\lambda=\omega\big(\frac{1}{\sqrt{k}}\big)$  & Exponential in $\lambda\sqrt{k}$ & Exponential in $\lambda\sqrt{k}$\tabularnewline
    \cline{2-4} 
     & $\lambda=O\big(\frac{1}{\sqrt{k}}\big)$
    
    $1 \ll k\ll p$ & $\Omega(k\log p)$ & $\Omega(k\log p)$\tabularnewline
    \cline{2-4} 
     & $\lambda=O\big(\frac{1}{\sqrt{k}}\big)$
    
    $p\ll k\ll p^{\frac{4}{3}}$ & $\Omega(k)$ & $\Omega(k\log p)$\tabularnewline
    \cline{2-4} 
     & $\lambda=O\big(\frac{1}{\sqrt{k}}\big)$
    
    $p^{\frac{4}{3}}\ll k\ll p^{2}$ & $\Omega\Big(\frac{p^{2}}{\sqrt{k}}\Big)$
    
    {\scriptsize (between $\Omega(p)$ and $\Omega(k)$)} & $\Omega(k\log p)$\tabularnewline
    \hline 
    \multirow{4}{4.2cm}{{\centering \\ ~ \vspace*{-0.2cm} \\  \textbf{Bounded edge and degree $\mathcal{G}_{k,d}$} \\
    \medskip
    Distortion $q_{\mathrm{max}} < \frac{k}{4}\frac{d-2}{d}$ \\
    \medskip
    (Theorems~\ref{thm:necc_d1} and \ref{thm:necc_d2}) \\ }} & $\lambda=\omega\big(\frac{1}{d}\big)$  & Exponential in $\lambda d$ & Exponential in $\lambda d$\tabularnewline
    \cline{2-4} 
     & $\lambda=O\big(\frac{1}{d}\big)$
    
    $d^2 \ll k \ll p$ & $\Omega(d^{2}\log p)$ & $\Omega(d^{2}\log p)$\tabularnewline
    \cline{2-4} 
     & $\lambda=O\big(\frac{1}{d}\big)$
    
    $p\ll k\ll p\sqrt{d}$ & $\Omega(d^{2})$ & $\Omega(d^{2}\log p)$\tabularnewline
    \cline{2-4} 
     & $\lambda=O\big(\frac{1}{d}\big)$
    
    $p\sqrt{d}\ll k\le\frac{pd}{2}$ & $\Omega\Big(\frac{d^{3}p^{2}}{k^{2}}\Big)$
    
    {\scriptsize (between $\Omega(d)$ and $\Omega(d^{2})$)} & $\Omega(d^{2}\log p)$\tabularnewline
    \hline 
    \multirow{2}{4.2cm}{{\centering \vspace*{-0.63cm} \\ \textbf{Bounded edge and degree with sparse separators $\mathcal{G}_{k,d,\eta,\gamma}$} \\
    \medskip
    Distortion $q_{\mathrm{max}}<\frac{(c\eta-1)^{2}k}{2c\eta(2\eta+m(\gamma+1))}$ 
    ($c\in(0,1)$, $m\in\{0,\dotsc,\frac{d}{2}-\eta\}$) \\
    \medskip
    (Theorem~\ref{thm:necc_s1}) \\ }} & \smallskip $\lambda=\omega\Big(\min\Big\{\frac{1}{\sqrt{\eta}},\frac{1}{m^{\frac{1}{1+\gamma}}}\Big\}\Big)$
    
    $\lambda = O(1)$
    
    $k\le\frac{p}{4}$ \smallskip &  Exponential in $\max\big\{\lambda^2\eta,\lambda^{\gamma+1}m\big\}$  & Exponential in $\max\big\{\lambda^2\eta,\lambda^{\gamma+1}d\big\}$\tabularnewline
    \cline{2-4} 
     & \smallskip $\lambda=O\Big(\min\Big\{\frac{1}{\sqrt{\eta}},\frac{1}{m^{\frac{1}{1+\gamma}}}\Big\}\Big)$
     
     $\lambda = O(1)$
    
    $k\le\frac{p}{4}$ \smallskip & $\Omega\Big(\max\Big\{\eta,m^{\frac{2}{\gamma+1}}\Big\}\log p\Big)$ & $\Omega\Big(\max\Big\{\eta,d^{\frac{2}{\gamma+1}}\Big\}\log p\Big)$\tabularnewline
    \hline 
    \end{tabular}
    \par\end{centering}

\protect\protect\caption{Summary of main results on parital recovery, and comparisons to the
best known necessary conditions for exact recovery.  Each entry shows the necessary scaling law for the number of samples required to achieve a vanishing error probability. \label{tbl:summary}} \vspace*{-3ex}
\end{table*}

 Table \ref{tbl:summary} reveals the following facts:
\begin{enumerate}
    \item In all of the known cases where exact recovery is known to be difficult, i.e., exponential in a quantity that increases in the problem dimension, the same difficulty is observed for approximate recovery, at least for the values of $\qmax$ shown.  For $\Gc_k$ and $\Gc_{k,d}$, this is true even when we allow for up to a quarter of the edges to be in error.  Note that we did not seek to optimize this fraction in our analysis, and we expect similar difficulties to arise even when higher proportions of errors are allowed.  In fact, by a simple variation of our analysis outlined in Remark \ref{rem:extension} in Section \ref{sec:ENS3}, we can already increase this fraction from $\frac{1}{4}$ to $\frac{1}{2}$.
    \item In many of the cases where the necessary conditions for exact recovery lack exponential terms, the corresponding necessary conditions for approximate recovery are identical or near-identical; in particular, see the second and third rows corresponding to $\Gc_k$, the second and third rows corresponding to $\Gc_{k,d}$, and the second row corresponding to $\Gc_{k,d,\eta,\gamma}$ with $m = \frac{d}{2} - \eta$.  While there are logarithmic terms missing in some cases (e.g., $k$ vs.~$k\log p$), these are typically insignificant in the regimes considered (e.g., $k=\Omega(p)$).
    \item In contrast, there are some cases where significant gaps remain between the best-known conditions for exact recovery and approximate recovery.  The two most extreme cases are as follows: (i) If $k=\Theta(p^{2-\epsilon})$ for some small $\epsilon > 0$, the necessary conditions for $\Gc_k$ are $\Omega(p^{2-\epsilon} \log p)$ and $\Omega(p^{1+\epsilon/2})$, respectively; (ii) If $k = \Theta(pd)$, then the necessary conditions for $\Gc_{k,d}$ are $\Omega(d^2 \log p)$ and $\Omega(d \log p)$, respectively.  It remains an open problem as to whether this behavior is fundamental, or due to a weakness in the analysis.
\end{enumerate}

The starting point of our results is a modification of Fano's inequality for the purpose of handling approximate recovery.  To obtain the above results, we apply this bound to ensembles of graphs that can be broadly classed into two categories.  The first considers graphs with a large number of isolated edges, or more generally, isolated cliques.  We characterize how difficult each graph is to distinguish from the empty graph, and use this to derive the results given in item 2) above.  On the other hand, the results on the exponential terms discussed in item 1) arise from considering ensembles in which several groups of nodes are always highly correlated due to the presence of a large number of edges among them, thus making it difficult to determine precisely which edges these are.

Both of these categories help in providing bounds that match those for exact recovery.  For example, the $\Omega(k \log p)$ behavior for $\lambda = O\big(\frac{1}{k}\big)$ in \cite{San12} is proved by considering graphs with a single isolated edge, and our analysis extends this to approximate recovery by considering graphs with $k$ isolated edges.  Analogously, the exponential behavior (e.g., in $\lambda\sqrt{k}$) in \cite{San12} is proved by considering cliques with one edge removed, and our analysis reveals that the same exponential behavior arises even if a constant fraction of the the edges are removed.

We provide numerical results on our ensembles in Section \ref{sec:NUMERICAL} supporting our theoretical findings.  Specifically, we implement optimal or near-optimal decoding rules in a variety of cases, and find that while approximate recovery can be easier than exact recovery, the general behavior of the two is similar.

\section{Main Results} \label{sec:RESULTS}

In this section, we present our main results, namely, algorithm-independent necessary conditions for the criterion in \eqref{eq:pe_approx} with all $\lambda_{ij} = \lambda$.  Our conditions are written in terms of asymptotic $o(1)$ terms for clarity, but purely non-asymptotic variants can be inferred from the proofs.  Throughout the section, we make use of the binary entropy function in nats, $H_2(\theta) := -\theta \log \theta - (1-\theta)\log(1-\theta)$.  Here and subsequently, all logarithms have base $e$.

All proofs are deferred to later sections; some preliminary results are presented in Section \ref{sec:AUXILIARY}, a number of ensembles are presented and analyzed in Section \ref{sec:ENSEMBLES}, and the resulting theorems are deduced in Section \ref{sec:APPLICATIONS}.

\subsection{Bounded Number of Edges Class $\Gc_k$}

We first consider the class $\Gc_{k}$ of graphs with at most $k$ edges.  It will prove convenient to treat two cases separately depending on how $k$ scales with $p$.

\begin{thm} \label{thm:necc_k1}
    \emph{(Class $\Gc_{k}$ with $k \le p/4$)} For any number of edges such that $k \to \infty$ and $k \le p/4$, and any distortion level $\qmax = \lfloor \theta k \rfloor$ for some $\theta \in \big(0,\frac{1}{4}\big)$, it is necessary that
    \begin{multline}
        n \ge \max\bigg\{ \frac{ e^{\lambda(\sqrt{k/2}-2)/2} \big( \log 2  - H_2(2\theta) \big) }{ 6 \lambda k }, \\ \frac{ 2(1-\theta)\log p }{ \lambda\tanh\lambda }\bigg\} \Big( 1-\delta - o(1) \Big) \label{eq:k_cond_final1}
    \end{multline}
    in order to have $\pe(\qmax) \le \delta$ for all $G \in \Gc_k$.
\end{thm}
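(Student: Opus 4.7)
The plan is to prove the two terms inside the maximum separately, each arising from a carefully designed ensemble of graphs in $\Gc_k$, and then combine them via a modified Fano inequality suited to the approximate recovery criterion.

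First I would establish the approximate-recovery Fano bound. For any finite ensemble $\mathcal{M} \subseteq \Gc_k$ with $G$ drawn uniformly from $\mathcal{M}$, and any decoder $\hat{G}(\Xv)$, one has
\begin{equation*}
    \pe(\qmax) \;\ge\; 1 - \frac{n \cdot I_{\max} + \log 2}{\log\!\big(|\mathcal{M}|/N_{\max}\big)},
\end{equation*}
where $I_{\max} := \max_{G,G' \in \mathcal{M}} D(P_G \| P_{G'})$ bounds the per-sample KL divergence and $N_{\max}$ is the largest number of graphs in $\mathcal{M}$ lying within edit distance $\qmax$ of any fixed graph. This is the standard Fano-style reduction (as in \cite{San12,Sha14}), but with $|\mathcal{M}|$ replaced by the ``effective cardinality'' $|\mathcal{M}|/N_{\max}$ to account for the fact that decoders need only identify $G$ up to an edit-distance-$\qmax$ ball.

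Next I would construct two ensembles, each tailored to one of the two terms.

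\textbf{Ensemble A (isolated-edges, giving the $\log p$ term).} Pick a disjoint matching of $k$ edges among the $p$ vertices; with $k \le p/4$ there is enough room. Among the $(1-\theta)k$ edges that must be correctly recovered, each can be ``moved'' to a free vertex, yielding roughly $p^{(1-\theta)k}$ distinct graphs in $\mathcal{M}$, while any edit-distance-$\qmax$ ball around a fixed graph has size roughly $p^{\theta k}$. So $\log(|\mathcal{M}|/N_{\max}) \gtrsim (1-\theta) k \log p$. Between two such graphs, $D(P_G\|P_{G'})$ per sample is $O(k \lambda \tanh\lambda)$, since shifting one isolated edge changes only a few local terms and the per-edge contribution is controlled by $\lambda \tanh\lambda$ (which is the classical Ising KL rate for a single pair of coupled spins). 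Plugging into Fano and dividing through by $k$ produces the $\frac{2(1-\theta)\log p}{\lambda\tanh\lambda}$ term.

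\textbf{Ensemble B (clique-minus-edges, giving the exponential-in-$\lambda\sqrt{k}$ term).} Take a clique $K_m$ on $m \approx \sqrt{2k}$ vertices, so $\binom{m}{2} \approx k$, and consider all subgraphs obtained by deleting exactly $\theta' \binom{m}{2}$ of its edges for some appropriate fraction. The count is $\binom{k}{\theta' k}$, whose log is $k H_2(\theta') - O(\log k)$; the edit-distance-$\qmax$ ball has at most $\binom{k}{2\theta'k} = k \log\binom{2}{2\theta'k}$ roughly $e^{k H_2(2\theta')}$ elements, so $\log(|\mathcal{M}|/N_{\max}) \gtrsim k(\log 2 - H_2(2\theta))$ after using $H_2(\theta') \to \log 2$ at $\theta' = \frac12$ and an inclusion-style argument. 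The crucial ingredient is a KL estimate showing that, when $\lambda m$ is large, the Ising distributions on two such almost-cliques are exponentially close: both concentrate on the two configurations ``all $+1$'' and ``all $-1$'' on the clique vertices, and the probability mass on configurations where a single removed edge ``disagrees'' is of order $e^{-\lambda m/2}$. This gives $D(P_G\|P_{G'}) = O(\lambda k \cdot e^{-\lambda(m-2)/2})$, which with $m = \sqrt{2k}$ is exactly the exponential $e^{-\lambda(\sqrt{k/2}-2)/2}$ suppression appearing in the theorem. Combining with Fano yields the first term.

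The main obstacle is the KL bound for Ensemble B. A direct Taylor expansion of the log-partition function is inadequate because it hides the exponential decay. My approach would be to write $P_G$ as a mixture over the two ``polarized'' states plus a small residual whose mass I bound by comparing $Z$ for the clique and for the clique-minus-$q$-edges using monotonicity of the partition function and Griffiths-type inequalities, then control $D(P_G\|P_{G'})$ through a pointwise $\chi^2$-like comparison on the reduced support. Once the factor $e^{-\lambda(m-2)/2}$ is extracted, the remainder of the calculation is routine arithmetic, and taking the maximum of the two ensemble bounds (allowed because the lower bound holds for \emph{any} valid ensemble) yields the stated inequality.
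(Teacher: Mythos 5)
Your Fano modification and your Ensemble A are essentially the paper's Lemma~\ref{lem:Fano_approx} and Ensemble~1 (all graphs consisting of $k$ node-disjoint edges, covered by the empty graph with $\epsilon = k\lambda\tanh\lambda$), so that half of the argument is sound, modulo some looseness in the counting (the paper gets the factor $2$ in $\frac{2(1-\theta)\log p}{\lambda\tanh\lambda}$ because $\log\binom{p/2}{2}^{k}\approx 2k\log p$ and the distortion ball has size $\approx\binom{p}{2}^{\qmax}\approx p^{2\qmax}$; your heuristic count $p^{(1-\theta)k}/p^{\theta k}$ would not reproduce that constant).

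The genuine gap is in Ensemble B. You take \emph{all} subgraphs of $K_m$ ($m\approx\sqrt{2k}$) obtained by deleting a constant fraction $\theta'\binom{m}{2}$ of the edges and claim that every such graph $G$ satisfies $D(P_G\|P_{G'})=O(\lambda k\, e^{-\lambda(m-2)/2})$ with $G'$ the full clique. This uniform bound is false: since $\theta'\binom{m}{2}\gg m-1$, the ensemble contains graphs in which all $m-1$ edges at some vertex $v$ have been deleted, so $v$ is independent of the rest and $\EE_G[X_vX_j]=0$ for every $j$. By the divergence decomposition $D(P_G\|P_{G'})\le\sum_{(i,j)\in E'\setminus E}\lambda(1-\EE_G[X_iX_j])$, those $m-1$ missing edges alone contribute $\Theta(\lambda m)$, with no exponential suppression, and the Fano bound then collapses (you would only recover an $\Omega(1/\lambda)$-type condition from such graphs). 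More generally, the correlation $\EE_G[X_iX_j]$ for a deleted pair depends on the connectivity that \emph{remains}, and arbitrary deletion patterns give no uniform control. The paper avoids exactly this trap: its Ensemble~3 fixes a backbone $G_0$ (two intact $m$-cliques with $m\approx\sqrt{k/2}$ joined by a perfect matching of $m$ always-present edges) contained in \emph{every} graph of the ensemble, and only the remaining $m(m-1)$ inter-clique edges are free. Monotonicity of correlations in ferromagnetic models (\eqref{eq:L_subgraph}) then reduces the KL bound to the single worst case $G=G_0$, for which Lemma~\ref{lem:large_clique_ens} gives $D(P_{G_0}\|P_{G'})\le 12\lambda m^4 e^{-\lambda(m-1)/2}$. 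This is also why the theorem's exponent is $\lambda(\sqrt{k/2}-2)/2$ rather than the larger $\lambda(\sqrt{2k}-2)/2$ your single-clique construction would suggest, and why $\qmax$ is capped at $k/4$ rather than $k/2$ (about half the potential edges are pinned by the backbone; see Remark~\ref{rem:extension}). To repair your argument you would need to restrict the deletable edges to a set whose complement already forces near-perfect correlation of all $m$ nodes --- which is, in effect, the paper's construction.
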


We proceed by considering two cases as in \cite{San12}.  In the case that $\lambda\sqrt{k} \to \infty$ at any rate faster than logarithmic in $p$ (or even logarithmic with a constant that is not too small), the sample complexity is dominated by the exponential term $e^{\lambda(\sqrt{k}-2)/2}$, and any recovery procedure requires a huge number of samples.  Thus, in this case, even the approximate recovery problem is very difficult.  On the other hand, if $\lambda = O\big(\frac{1}{\sqrt k}\big)$ then the second condition in \eqref{eq:k_cond_final1} gives a sample complexity of $\Omega(k \log p)$, since $\tanh \lambda = O(\lambda)$ as $\lambda \to 0$.  

These observations are the same as those made for exact recovery in \cite{San12}, where the best known necessary conditions for $\Gc_k$ were given. Thus, we have reached similar conclusions even allowing for nearly a quarter of the edges to be in error.

\begin{thm} \label{thm:necc_k2}
    \emph{(Class $\Gc_{k}$ with $k=\Omega(p)$)} For any number of edges of the form $k = \lfloor c p^{1+\nu} \rfloor$ for constants $c>0$ and $\nu\in[0,1)$, and any distortion level $\qmax = \lfloor \theta k \rfloor$ for some $\theta \in \big(0,\frac{1}{4}\big)$, it is necessary that
    \begin{multline}
        n \ge \max\bigg\{ \frac{ e^{\lambda(\sqrt{k/2}-2)/2} \big(\log 2  - H_2(2\theta)\big) }{ 6 \lambda k }, \\ \frac{ \log 2  - H_2(\theta)  }{ \lambda \frac{e^{2\lambda}\cosh(4\lambda cp^{\nu}) - 1}{e^{2\lambda}\cosh(4\lambda cp^{\nu}) + 1} } \bigg\}  \Big( 1-\delta - o(1) \Big)  \label{eq:k_cond_final2}
    \end{multline}
    in order to have $\pe(\qmax) \le \delta$ for all $G \in \Gc_k$.
\end{thm}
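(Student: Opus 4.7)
The plan is to invoke the modified Fano-type inequality of Section~\ref{sec:AUXILIARY} twice and take the maximum over two separate restricted ensembles, following the same blueprint as Theorem~\ref{thm:necc_k1} but replacing the second ensemble. The first term of \eqref{eq:k_cond_final2} is identical to that of \eqref{eq:k_cond_final1}, so I would reuse the ``crowded clique'' ensemble: place all graphs inside a single clique on roughly $\sqrt{2k}$ nodes and distinguish them by which subset of $\Theta(\qmax)$ edges is deleted. Since the underlying structure is nearly complete, the marginal $\EE_G[X_iX_j]$ for any edge of the clique stays close to $1$ no matter which few edges are removed, so the per-sample KL between hypotheses is exponentially small in $\lambda\sqrt{k}$, and the Fano-with-distortion bound then reproduces the first term exactly.

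The second term is where Theorem~\ref{thm:necc_k2} departs from Theorem~\ref{thm:necc_k1}. In the regime $k\le p/4$ one can pack $k$ \emph{isolated} edges among $\binom{p}{2}$ positions and read off $\tfrac{\log p}{\lambda\tanh\lambda}$ from Fano; once $k=\Theta(p^{1+\nu})$ this is no longer feasible, since the average degree $\dbar = 2cp^\nu$ is forced to be large. The shape of the second term, with numerator $\log 2 - H_2(\theta)$ equal to the binary rate--distortion function at Hamming distortion $\theta$, strongly suggests an ensemble indexed by an $L$-bit codeword in which each bit flips exactly one edge and the factor $L$ cancels between log-cardinality and per-sample KL. My plan is to partition the $p$ vertices into $L = \Theta(p/\dbar)$ blocks of size $\Theta(\dbar)$, place inside each block a fixed dense ``background'' subgraph (for instance a near-regular graph or a $K_{2,\dbar}$-type bipartite structure) that accounts for almost all of the $k$ edges, and then designate inside each block one pair of nodes whose ``toggle'' edge is present or absent according to the block's bit, with a matching compensating edge swap elsewhere so that the total edge count remains exactly $k$. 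Two codewords that differ in more than $2\qmax$ bits then produce graphs at edit distance $>2\qmax$, which is what approximate recovery at level $\qmax$ demands.

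The remaining step, and the main technical obstacle, is to compute or carefully bound the per-sample KL between two ensemble graphs that differ only in the state of a single block. A direct marginalization over the background block variables, extending the $K_{2,m}$-type computation that gives $\EE[X_iX_j]=(\cosh^m(2\lambda)-1)/(\cosh^m(2\lambda)+1)$ for a star-bipartite graph, should show that the two toggle endpoints have effective pairwise correlation of the form $\tanh\bigl(\lambda+\tfrac{1}{2}\log\cosh(4\lambda cp^\nu)\bigr)$, which by the identity $\tanh(x/2)=(e^x-1)/(e^x+1)$ is exactly the denominator factor in \eqref{eq:k_cond_final2}. Tensorizing over the $L$ independent blocks then cancels the $L$ factors on both sides. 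The trickiest parts will be (i) designing the block so that its background genuinely produces the $\cosh(4\lambda cp^\nu)$ enhancement in both the small- and large-$\lambda\dbar$ regimes, (ii) controlling the partition-function ratios $Z_{G}/Z_{G'}$ so that the symmetric KL is tightly tied to the marginal-correlation difference, and (iii) honoring the exact edge-count constraint $|E|=k$ without destroying the inter-block independence that the tensorization argument needs.
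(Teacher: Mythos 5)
Your high-level plan (two ensembles, Fano with distortion, groups of size $\Theta(p^{\nu})$ producing the $\cosh(4\lambda c p^{\nu})$ factor) is the right shape, but both of your ensemble constructions break down quantitatively. The fatal problem is in the second term, which is the part specific to this theorem. Your construction has exactly one free ``toggle'' edge per block, hence only $L=\Theta(p/p^{\nu})=\Theta(p^{1-\nu})$ free edges in total, while the theorem must tolerate $\qmax=\lfloor\theta k\rfloor=\Theta(p^{1+\nu})$ edge errors. As soon as $\qmax$ exceeds the number of free edges (which happens for every $\nu>0$, and for $\nu=0$ unless $c$ is small), every graph in $\Tc$ lies within edit distance $\qmax$ of every other, so $A(\qmax)=|\Tc|$ and the bound \eqref{eq:Fano_approx} is vacuous. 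The paper's Ensemble 2 avoids this by making \emph{all} $\binom{m}{2}$ intra-block edges free in each of the $\alpha=\lfloor p/m\rfloor$ blocks of size $m=\lfloor 2cp^{\nu}\rfloor$, so that $\log|\Tc|=\alpha\binom{m}{2}\log 2=\Theta(k)$ and $\log|\Tc|-\log A(\qmax)\ge\alpha\binom{m}{2}(\log 2-H_2(\theta))$. The reference graph is then simply the empty graph, and the KL is controlled by the uniform bound $\EE_G[X_iX_j]\le\frac{e^{2\lambda}\cosh(2\lambda m)-1}{e^{2\lambda}\cosh(2\lambda m)+1}$ valid for an \emph{arbitrary} graph on $m$ nodes (Lemma \ref{lem:groups_ens}); no designed background subgraph, marginalization, or partition-function control is needed. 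Your guessed correlation expression is indeed the right quantity, but it arises as a worst-case single-edge bound relative to the empty graph, not as the effective coupling of a toggle pair through a fixed dense background.

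The first term also has a gap as you describe it. An ensemble consisting of a single clique on $\approx\sqrt{2k}$ nodes with an arbitrary subset of $\Theta(\qmax)=\Theta(k)$ edges deleted contains members in which some node loses all $\approx\sqrt{2k}\ll\qmax$ of its incident edges and becomes isolated; for such a graph the deleted edges satisfy $\EE_G[X_iX_j]=0$, so by \eqref{eq:L_div2} the KL to the full clique is $\Omega(\lambda\sqrt{k})$ rather than exponentially small, and the single-center covering required by Lemma \ref{lem:Fano_approx} fails. The paper's Ensemble 3 instead fixes a backbone that is present in every member --- two $m$-cliques joined by a perfect matching, with $m=\lfloor\sqrt{k/2}\rfloor$ --- and varies only the remaining $m(m-1)$ inter-clique edges; the backbone forces all $2m$ nodes to be highly correlated in every ensemble member, which is what yields the $e^{-\lambda(m-1)/2}$ bound of Lemma \ref{lem:large_clique_ens}, and the fact that only about half of the potential edges are free is precisely the source of the $\theta<\frac{1}{4}$ restriction in the theorem statement, which your version does not account for.
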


As above, the sample complexity is exponential in $\lambda\sqrt{k}$ due to the first term in \eqref{eq:k_cond_final2}.  On the other hand, we claim that when $\lambda = O\big(\frac{1}{\sqrt k}\big)$, the second term in \eqref{eq:k_cond_final2} leads to the sample complexity $O(\min\{k,p^2/\sqrt{k}\})$.  To see this, we choose $k$ as in the theorem statement and note that $\lambda p^{\nu} = O( p^{-\frac{1}{2}(1+\nu) + \nu } ) = O( p^{-\frac{1}{2}(1-\nu) } )$; since $\cosh\zeta = 1 + O(\zeta^2)$ as $\zeta \to 0$, this  implies that $\cosh(2c\lambda p^{\nu}) = 1 + O(p^{-(1-\nu)})$.  We thus have $e^{2\lambda}\cosh(2c\lambda p^{\nu}) = \big(1 + O(p^{-\frac{1}{2}(1+\nu)})\big)\big((1 + O(p^{-(1-\nu)})\big)$, which finally yields $\frac{e^{2\lambda}\cosh(2c\lambda p^{\nu}) - 1}{e^{2\lambda}\cosh(2c\lambda p^{\nu}) + 1} = O(\max\{ p^{-\frac{1}{2}(1+\nu)}, p^{-(1-\nu)} \}) = O(\max\{1/\sqrt{k}, k/p^2\})$.

When $k = \Omega(p)$ and $k = O(p^{4/3})$, we have $\min\{k,p^2/\sqrt{k}\} = k$, and hence, these observations are again the same as those made for exact recovery in \cite{San12}, except that our growth rates do not include a $\log p$ term; this logarithmic factor is insignificant compared to the leading term $k=\Omega(p)$.  In contrast, the gap is more significant when $k \gg p^{4/3}$; in the extreme case, when $k=\Theta(p^{2-\epsilon})$ for some small $\epsilon > 0$, we obtain a scaling of $\Omega(p^{1+\epsilon/2})$, as opposed to $\Omega(k \log p) = \Omega(p^{2-\epsilon} \log p)$.

\subsection{Bounded Degree Class $\Gc_{k,d}$}

Next, we consider the glass $\Gc_{k,d}$ of graphs such that every node has degree at most $d$, and the total number of edges does not exceed $k$.  

\begin{thm} \label{thm:necc_d1}
    \emph{(Class $\Gc_{k,d}$ with $k \le p/4$)} For any maximal degree $d > 2$ and number of edges $k$ such that $k = \omega(d^2)$ and  $k \le p/4$, and any distortion level $\qmax = \lfloor \theta k \rfloor$ for some $\theta \in \big(0,\frac{1}{4}\frac{d-2}{d}\big)$, it is necessary that
    \begin{multline}
        n \ge \max\bigg\{ \frac{ e^{\lambda (d-2)/4} \big( \log 2  - H_2\big(\frac{d}{d - 2} \cdot 2\theta\big) \big) }{ 3 \lambda d^2 } , \\ \frac{ 2(1-\theta)\log p }{ \lambda\tanh\lambda } \bigg\} \Big( 1-\delta - o(1) \Big) \label{eq:d_cond_final1}
    \end{multline}
    in order to have $\pe(\qmax) \le \delta$ for all $G \in \Gc_{k,d}$.
\end{thm}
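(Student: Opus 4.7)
The plan is to apply the approximate-recovery form of Fano's inequality (the same tool used in Theorems~\ref{thm:necc_k1} and \ref{thm:necc_k2}) to two separate sub-ensembles of $\Gc_{k,d}$, and then take the larger of the two lower bounds. The first sub-ensemble produces the ``$\log p$'' term on the right of \eqref{eq:d_cond_final1}, while the second produces the exponential term in $\lambda d$. The condition $k \le p/4$ will be used only to guarantee that both ensembles fit inside the available $p$ vertices with room to spare.

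For the first sub-ensemble, I would reuse the isolated-edge construction of Theorem~\ref{thm:necc_k1} verbatim: a graph consisting of $k$ disjoint edges, modified by moving a few edges to adjacent vertex pairs. Since every vertex has degree at most $1 \le d$, this family lies in $\Gc_{k,d}$, and therefore the identical KL/Fano computation that produced $\frac{2(1-\theta)\log p}{\lambda \tanh\lambda}(1-\delta-o(1))$ in Theorem~\ref{thm:necc_k1} carries over without change. This already yields the second term inside the max in \eqref{eq:d_cond_final1}.

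For the second sub-ensemble I would pack $\Ord(k/d^2)$ disjoint $(d{+}1)$-cliques into the vertex set (each contributing $\binom{d+1}{2}$ edges and using $d+1$ vertices; the hypothesis $k = \omega(d^2)$ and $k \le p/4$ ensures these cliques both exist in large number and fit inside $V$), and let the ensemble consist of all graphs obtained by deleting one edge from each clique, indexed by the choice of deleted edges. Each such graph is in $\Gc_{k,d}$, every pair of ensemble members differs only by a small number of within-clique edge swaps, and the ensemble has cardinality $\bigl(\binom{d+1}{2}\bigr)^{\Ord(k/d^2)}$. The point is that under $P_G$ for any $G$ in the ensemble, the spins inside each $(d{+}1)$-clique are extremely highly correlated whenever $\lambda d$ is large, so the KL divergence between two members differing in exactly one clique is exponentially small in $\lambda d$; the $(d-2)/4$ in the exponent of \eqref{eq:d_cond_final1} is precisely what this bound will yield, in analogy with the $(\sqrt{k/2}-2)/2$ appearing in Theorem~\ref{thm:necc_k1}. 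Since $\theta < \frac{1}{4}\frac{d-2}{d}$, the distortion budget $\qmax$ can cause mismatches in at most a $\frac{d}{d-2}\cdot 2\theta < \frac{1}{2}$ fraction of the cliques, so the $\log 2 - H_2\!\bigl(\tfrac{d}{d-2}\cdot 2\theta\bigr)$ factor arises from the binary-entropy term in the approximate-recovery Fano bound applied per clique.

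The main obstacle will be the second sub-ensemble: one needs a sharp upper bound on $\mathrm{KL}(P_G \| P_{G'})$ (or equivalently on the one-sample mutual information) when $G$ and $G'$ differ by a single edge inside an otherwise complete $(d{+}1)$-clique. This requires tight control over quantities like $\PP_G[X_i \ne X_j]$ for $(i,j)$ inside such a clique, where the relevant estimate is of the form $\Ord(e^{-\lambda(d-2)/2})$; the constant in the exponent is what drives the $(d-2)/4$ factor after the usual Fano manipulations, so the calculation must be done carefully. Everything else, namely the packing argument, the bounded-degree check, and the final combination of the two lower bounds via the $\max$, is routine once the KL estimate is in place, and the $(1-\delta-o(1))$ factor is inherited directly from the approximate-recovery Fano inequality used throughout the paper.
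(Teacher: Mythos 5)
Your first sub-ensemble is exactly what the paper does (Ensemble 1, $k$ node-disjoint edges with $\alpha=k$, degree $1\le d$, KL at most $k\lambda\tanh\lambda$ to the empty graph), and that part carries over verbatim to give the $\frac{2(1-\theta)\log p}{\lambda\tanh\lambda}$ term. The second sub-ensemble, however, has a fatal counting problem. With $\alpha=\Theta(k/d^2)$ disjoint $(d{+}1)$-cliques each missing one edge, the ensemble has only $\binom{d+1}{2}^{\alpha}$ members, i.e.\ $\log|\Tc|=\Theta(\alpha\log d)$, while the distortion budget is $\qmax=\lfloor\theta k\rfloor=\Theta(\theta\alpha d^2)$ \emph{edges}. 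Two members of your ensemble that disagree in $j$ cliques are at edit distance only $2j\le 2\alpha$, so once $d\ge\sqrt{2/\theta}$ the trivial decoder that outputs the $\alpha$ \emph{full} cliques is within edit distance $\alpha\le\qmax$ of every graph in the ensemble and succeeds with probability one; equivalently, $A(\qmax)=|\Tc|$ and Lemma \ref{lem:Fano_approx} yields nothing. Your claim that $\theta<\frac{1}{4}\frac{d-2}{d}$ limits the mismatches to half the cliques is where this goes wrong: the budget is measured in edges, not cliques, and each clique holds $\Theta(d^2)$ edges but only $\Theta(\log d)$ nats of uncertainty. This is precisely the obstruction the paper flags when it says the exact-recovery ensembles of \cite{San12} (clique minus one edge) ``differ only by one or two edges, thus making approximate recovery trivial.''

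The paper's fix is Ensemble 3: each building block is a pair of $m$-cliques with $m=\lceil d/2\rceil$, joined by $m$ fixed cross edges, and the ensemble ranges over \emph{all} $2^{m(m-1)}$ subsets of the remaining cross edges. This packs one bit of uncertainty per undetermined edge, so with $\alpha=\lfloor k/\binom{2m}{2}\rfloor$ blocks one gets $\log|\Tc|-\log A(\qmax)=\alpha m(m-1)\bigl(\log 2-H_2(\theta_3)\bigr)$, which stays positive for a constant fraction of edge errors; the constraint $\theta<\frac{1}{4}\frac{d-2}{d}$ arises because the undetermined edges number only $k\frac{d-2}{2d}(1+o(1))$ out of $k$ total. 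The exponential factor $e^{\lambda(d-2)/4}$ then comes from Lemma \ref{lem:large_clique_ens}, bounding the KL divergence from each such graph to the full $2m$-clique by $12\lambda m^4 e^{-\lambda(m-1)/2}$ with $m-1\approx(d-2)/2$ --- note the exponent is governed by cliques of size $d/2$, not $d+1$ as your sketch assumes. So the exponential term requires a genuinely different construction from the one you propose, not just a sharper KL estimate.
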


The first term in \eqref{eq:d_cond_final1} reveals that the sample complexity is exponential in $\lambda d$.  On the other hand, if $\lambda = O\big(\frac{1}{d}\big)$ then the second term gives a sample complexity of $\Omega(d^2 \log p)$. 

We cannot directly compare Theorem \ref{thm:necc_d1} to \cite{San12}, since there $k$ was assumed to be unrestricted for the degree-bounded ensemble.  However, the analysis therein is easily extended to $\Gc_{k,d}$, and doing so recovers the nearly identical observations to those above, as summarized in Table \ref{tbl:summary}.  In this sense, Theorem \ref{thm:necc_d1} matches the best known necessary conditions for exact recovery even when nearly a quarter of the edges may be in error.

\begin{thm} \label{thm:necc_d2}
    \emph{(Class $\Gc_{k,d}$ with $k=\Omega(p)$)} For any maximal degree $d > 2$ and number of edges $k$ such that $k = \omega(d^2)$ and $k \le \frac{1}{2} p (d'-1)$ for some $d' \le d$, and any distortion level $\qmax = \lfloor \theta k \rfloor$ for some $\theta \in \big(0,\frac{1}{4}\frac{d-2}{d}\big)$, it is necessary that
    \begin{multline}
        n \ge \max\bigg\{ \frac{ e^{\lambda (d-2)/4} \big( \log 2  - H_2\big(\frac{d}{d - 2} \cdot 2\theta\big) \big) }{ 3 \lambda d^2 }, \\ \frac{ \log 2  - H_2(\theta) }{ \lambda \frac{e^{2\lambda}\cosh(2\lambda d') - 1}{e^{2\lambda}\cosh(2\lambda d') + 1} } \bigg\} \Big( 1-\delta - o(1) \Big) \label{eq:d_cond_final2}
    \end{multline}
    in order to have $\pe(\qmax) \le \delta$ for all $G \in \Gc_{k,d}$.
\end{thm}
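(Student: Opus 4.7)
The plan is to prove Theorem \ref{thm:necc_d2} by invoking the approximate-recovery variant of Fano's inequality twice, on two different restricted sub-ensembles of $\Gc_{k,d}$, each designed to yield one of the two arguments of the outer $\max$ in \eqref{eq:d_cond_final2}.

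The first term inside the $\max$ coincides exactly with the first term of \eqref{eq:d_cond_final1}, so I would reuse the \emph{correlated-clique} ensemble constructed in the proof of Theorem \ref{thm:necc_d1}. In that construction each graph contains one or several dense $(d+1)$-cliques, and distinct ensemble members differ by removing a constant fraction of the edges inside some clique, in the spirit of the ``cliques with a constant fraction of edges removed'' construction anticipated in Section \ref{sec:CONTIRBUTIONS}. The factor $\frac{d}{d-2}$ multiplying $2\theta$ inside $H_2(\cdot)$, together with the admissibility range $\theta \in \big(0, \tfrac{1}{4}\cdot\tfrac{d-2}{d}\big)$, originates from the fact that the minimum edit distance between distinct ensemble members is proportional to $d-2$ rather than $d$, so that $H_2$ must be evaluated at the ratio of $\qmax$ to this pairwise distance. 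Since $k = \omega(d^2)$, the edge budget easily accommodates these cliques together with whatever additional scaffolding the ensemble requires.

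The second term, with denominator factor $\frac{e^{2\lambda}\cosh(2\lambda d')-1}{e^{2\lambda}\cosh(2\lambda d')+1}$, is the novel ingredient for the $k=\Omega(p)$ regime and calls for a tailored ensemble, in direct analogy with the step from Theorem \ref{thm:necc_k1} to Theorem \ref{thm:necc_k2}, where the factor $\cosh(4\lambda cp^\nu)$ emerged from packing \emph{stars} into the graph. I would use a disjoint packing of stars of centre-degree $d'$ as the basic building block, with different ensemble members obtained by swapping the identity of a leaf of one block with that of an unused ``spare'' vertex (or, equivalently, by relocating one edge of a block). The hypothesis $k\le \tfrac{1}{2}p(d'-1)$ is precisely what guarantees room for enough disjoint star blocks on $p$ vertices while respecting both the edge budget $k$ and the degree bound $d' \le d$. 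A direct evaluation of the partition function of such a star and of the resulting per-edge correlation $\EE_G[X_i X_j]$ at a block edge produces the ratio $\frac{e^{2\lambda}\cosh(2\lambda d')-1}{e^{2\lambda}\cosh(2\lambda d')+1}$, which in turn controls the per-sample KL divergence $D(P_G \Vert P_{G'})$ between two ensemble graphs differing only in one such leaf swap.

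From here the two arguments follow the common template already used in Theorems \ref{thm:necc_k1}--\ref{thm:necc_d1}: (i) apply the modified Fano inequality, which introduces a $\log 2 - H_2(\theta)$ (respectively $\log 2 - H_2\big(\tfrac{d}{d-2}\cdot 2\theta\big)$) deficit reflecting the Hamming ball of radius $\qmax$ around each ensemble member; (ii) upper bound $I(G;\Xv)\le n \cdot \bar D$, where $\bar D$ is the average pairwise per-sample KL divergence for the chosen ensemble, computed from the closed-form expressions above; (iii) solve for $n$ under the constraint $\pe(\qmax)\le \delta$; and (iv) take the larger of the two resulting lower bounds, which gives exactly \eqref{eq:d_cond_final2}. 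The main obstacle is step (ii) for the second ensemble: one must verify that the structured multi-edge difference between two ensemble members yields a per-sample KL divergence that scales precisely like $\lambda\cdot\frac{e^{2\lambda}\cosh(2\lambda d')-1}{e^{2\lambda}\cosh(2\lambda d')+1}$ rather than something looser, so the bound remains tight as $\lambda d'$ ranges from the small- to the large-correlation regime; a secondary book-keeping obstacle is confirming that every pair of ensemble members in fact differs on strictly more than $\qmax = \lfloor \theta k\rfloor$ edges for every $\theta \in \big(0, \tfrac{1}{4}\cdot\tfrac{d-2}{d}\big)$, so that approximate recovery on the ensemble is non-trivial.
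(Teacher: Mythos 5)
Your plan for the first term of the $\max$ is essentially the paper's: it reuses Ensemble 3 (pairs of $\lceil d/2\rceil$-cliques joined by $m$ fixed inter-clique edges, with the remaining $m(m-1)$ inter-clique edges free), with $\alpha = \lfloor k/\binom{2m}{2}\rfloor$ copies, and the factor $\frac{d}{d-2}$ indeed arises because only a fraction roughly $\frac{d-2}{2d}$ of the $k$ edges are ``unknown,'' so $\theta_3 = \frac{d}{d-2}\cdot 2\theta$. (Your stated reason --- a pairwise edit distance proportional to $d-2$ --- is not quite the right mechanism, but the construction you point to is the right one.)

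The second term is where there is a genuine gap. Your proposed ensemble consists of disjoint stars in which distinct members differ by relocating a single leaf edge, i.e., by edit distance $2$. This cannot yield any nontrivial bound under the approximate recovery criterion with $\qmax = \lfloor\theta k\rfloor = \Omega(k)$: in the modified Fano inequality the relevant quantity is $\log|\Tc| - \log A(\qmax)$, and for an ensemble whose members are pairwise within edit distance $2$ of one another after a single swap, the number $A(\qmax)$ of ensemble members within distance $\qmax$ of a given graph is comparable to $|\Tc|$ itself, so the numerator collapses. This is precisely the obstruction the paper identifies with the exact-recovery ensembles of \cite{San12,Sha14}, and it is why your ``secondary book-keeping obstacle'' (verifying pairwise distance exceeding $\qmax$) is in fact fatal rather than secondary. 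Separately, a star with centre degree $d'$ gives $\EE_G[X_iX_j] = \tanh\lambda$ on each edge (leaves are conditionally independent given the centre), not the ratio $\frac{e^{2\lambda}\cosh(2\lambda d')-1}{e^{2\lambda}\cosh(2\lambda d')+1}$; that ratio comes from near-cliques. The paper instead uses Ensemble 2: $\alpha = \lfloor k/\binom{d'}{2}\rfloor$ disjoint groups of $d'$ nodes, with each graph an \emph{arbitrary subset} of the $\binom{d'}{2}$ intra-group edges, so that $|\Tc| = 2^{\alpha\binom{d'}{2}}$, $A(\qmax) \le 1 + \qmax\binom{\alpha\binom{d'}{2}}{\qmax}$ keeps the Fano deficit proportional to $\log 2 - H_2(\theta)$, and Lemma \ref{lem:groups_ens} bounds the KL divergence from any member to the empty graph by $\alpha\binom{d'}{2}\lambda\frac{e^{2\lambda}\cosh(2\lambda d')-1}{e^{2\lambda}\cosh(2\lambda d')+1}$ (the constraint $k\le\frac{1}{2}p(d'-1)$ ensures $\alpha d'\le p$ nodes suffice). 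Your proposal would need to be replaced by such a combinatorially rich ensemble to go through.
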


The sample complexity remains exponential in $\lambda d$. By some standard asymptotic expansions similar to those following Theorem \ref{thm:necc_k2}, we have $\frac{e^{2\lambda}\cosh(2\lambda d') - 1}{e^{2\lambda}\cosh(2\lambda d') + 1} = O\big( \max\big\{ \frac{1}{d}, \big( \frac{d'}{d} \big)^2 \big\}\big)$ whenever $\lambda = O\big(\frac{1}{d}\big)$; hence, the second condition in \eqref{eq:d_cond_final2} becomes $n = \Omega\big( d \min\big\{d, \big( \frac{d}{d'} \big)^2 \big\} \big)$.  Thus, if $d' = O(\sqrt{d})$ then we again get the desired $n = O(d^2 \log p)$  behavior; this means that we can allow for $k$ up to $O(p\sqrt{d})$.  More generally, we instead get the possibly weaker scaling law $n = \Omega\big( \min\big\{ d^2, d^3/(d')^2 \big\} \big)$, which is equivalent to $n = \Omega\big( \min\big\{ d^2,  \frac{d^3 p^2}{ k^2 } \big\} \big)$ when $k=\Theta(pd')$.  In the extreme case, when $k = \Theta(pd)$ (the highest growth rate possible given the degree constraint alone), this only recovers $\Omega(d \log p)$ scaling.

\subsection{Sparse Separator Class $\Gc_{k,d,\eta,\gamma}$} \label{sec:SPARSE_ENS}

We now consider the class $\Gc_{k,d,\eta,\gamma}$ of graphs in $\Gc_{k,d}$ that satisfy the $(\eta,\gamma)$-separation condition \cite{Ana12}.  We focus on the case $k \le p/4$, since the main graph ensemble that we consider for this class is not suited to the case that $k = \omega(p)$.

\begin{thm} \label{thm:necc_s1}
    \emph{(Class $\Gc_{k,d,\eta,\gamma}$ with $k \le p/4$)} Fix any parameters $(d,k,\eta,\gamma)$ with $k \le p/4$ and $\eta \le \lfloor\frac{d}{2}\rfloor$, and let $m$ be an integer in $\big\{0,\dotsc,\lfloor\frac{d}{2}\rfloor -\eta\big\}$.  For any distortion level $\qmax = \big\lfloor \theta \frac{(c\eta-1)^2 k}{ 2c\eta( 2\eta + m(\gamma+1) ) } \big\rfloor$ for some $\theta \in \big(0,\frac{1}{2}\big)$ and $c \in \big(\frac{1}{\eta},1\big]$, it is necessary that
    \begin{multline}
        n \ge \max\bigg\{ \frac{ \Big(1 + \big(\cosh(2\lambda)\big)^{(1-c)\eta - 1} \big( \frac{ 1+(\tanh\lambda)^{\gamma+1}}{1-(\tanh\lambda)^{\gamma+1}} \big)^{m} \Big) }{ 2\lambda c \eta }  \\ \times\big( \log 2  - H_2(\theta) \big), \frac{2(k-\qmax)\log p }{ k\lambda\tanh\lambda }\bigg\} \Big( 1-\delta - o(1) \Big) \label{eq:s_cond_final}
    \end{multline}
    in order to have $\pe(\qmax) \le \delta$ for all $G \in \Gc_{k,d,\eta,\gamma}$.
\end{thm}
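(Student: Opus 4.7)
The proof follows the two-term template established for Theorems~\ref{thm:necc_k1}--\ref{thm:necc_d2}: I apply the approximate-recovery version of Fano's inequality from Section~\ref{sec:AUXILIARY} to two separate restricted sub-ensembles of $\Gc_{k,d,\eta,\gamma}$, each tailored to produce one argument of the outer $\max$ in \eqref{eq:s_cond_final}, and keep the tighter lower bound.

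For the second (easier) argument, I would reuse the matching ensemble from the proof of Theorem~\ref{thm:necc_k1}: all graphs consisting of at most $k$ vertex-disjoint isolated edges. Such graphs trivially belong to $\Gc_{k,d,\eta,\gamma}$, since every connected component has only two vertices and so the $(\eta,\gamma)$-separator condition is met with zero blocking nodes. The remainder of the calculation is essentially identical to that in Theorem~\ref{thm:necc_k1}: the ensemble has cardinality of order $p^{2k}$, the single-sample KL between matchings differing in one edge scales as $\lambda\tanh\lambda$, and the $(k-\qmax)/k$ slack arises from the approximate-recovery form of Fano.

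The first argument needs a construction tailored to the separator condition. I would build an ensemble from $L := \lfloor k / (2\eta + m(\gamma+1)) \rfloor$ vertex-disjoint ``gadgets''. Each gadget consists of two endpoints $u, v$, together with $\eta$ short intermediaries $w_1,\dotsc,w_\eta$ each adjacent to both $u$ and $v$, plus $m$ internally-disjoint paths of length $\gamma+1$ from $u$ to $v$. A binary Fano pair $(G_0, G_1)$ is obtained by activating a size-$c\eta$ subset of the short intermediaries in each gadget, where the two hypotheses correspond to two (nearly) disjoint choices of these active subsets; the long paths and the remaining $(1-c)\eta$ short intermediaries are present in both. The resulting graphs lie in $\Gc_{k,d,\eta,\gamma}$: the $(\eta,\gamma)$-separator property between $u$ and $v$ is witnessed by the $\eta$ short intermediaries, the degree bound follows from $\eta + m \le d/2$, and the edge bound from the choice of $L$. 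Applying the binary approximate-Fano inequality then yields the numerator $\log 2 - H_2(\theta)$, with $\theta$ measuring $\qmax$ against the pairwise edit distance $K = \Theta(c\eta L)$ as encoded in the statement.

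The main obstacle is to upper-bound the single-sample KL divergence $D(P_{G_0}\|P_{G_1})$ by the reciprocal of $(1 + \cosh(2\lambda)^{(1-c)\eta-1}\big((1+(\tanh\lambda)^{\gamma+1})/(1-(\tanh\lambda)^{\gamma+1})\big)^m)/(2\lambda c\eta)$. Since the swap $G_0 \leftrightarrow G_1$ is a symmetry, the two partition functions coincide and $D(P_{G_0}\|P_{G_1})$ reduces, gadget by gadget, to $2\lambda c\eta \cdot \EE_{G_0}[X_u X_{w_i}]$ for any short intermediary $w_i$ in the swapped set. The correlation $\EE_{G_0}[X_u X_{w_i}]$ inside a single gadget can then be computed via its series-parallel structure: conditioning on $(X_u, X_v)$ renders the other $(1-c)\eta - 1$ active short intermediaries and the $m$ long paths mutually independent two-terminal Ising chains, whose effective couplings combine multiplicatively --- each short intermediary contributes a $\cosh(2\lambda)$ factor, and each length-$(\gamma+1)$ long path contributes $(1+(\tanh\lambda)^{\gamma+1})/(1-(\tanh\lambda)^{\gamma+1})$ via the transfer-matrix identity for an Ising chain. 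Carrying out this parallel-paths computation cleanly, with the combinatorial constants matching \eqref{eq:s_cond_final} exactly, is where most of the work lies.
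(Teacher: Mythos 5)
Your treatment of the second argument of the max (reusing the isolated-edges ensemble, which trivially satisfies the $(\eta,\gamma)$-separation condition) matches the paper and is fine. The first argument, however, has a fatal flaw: you place the hypothesis uncertainty on the wrong edges. In your gadget the two hypotheses differ in which size-$c\eta$ set of short intermediaries is \emph{activated}, i.e., the uncertain edges are the edges $(u,w_i)$ and $(v,w_i)$ themselves. Your own reduction then gives (essentially exactly) $D(P_{G_0}\Vert P_{G_1}) = 2\lambda c\eta\,\EE_{G_0}[X_u X_{w_i}]$ for $w_i$ in the swapped set, and you need $\EE_{G_0}[X_u X_{w_i}]$ to be of order $\big(1+(\cosh 2\lambda)^{(1-c)\eta-1}(\cdots)^m\big)^{-1}$, i.e., exponentially small in $\lambda^2\eta$ and $\lambda^{\gamma+1}m$. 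But $w_i$ is \emph{directly adjacent} to $u$ in $G_0$, so in this ferromagnetic model $\EE_{G_0}[X_u X_{w_i}] \ge \tanh\lambda$; the correlation only grows as you add the other paths. Hence $D(P_{G_0}\Vert P_{G_1}) \ge 2\lambda c\eta\tanh\lambda$, which is polynomially large and even \emph{increasing} in $\eta$, so your construction yields no exponential term at all — the resulting bound on $n$ would be weaker than the one already obtained from the isolated-edges ensemble. The series-parallel/transfer-matrix computation you invoke bounds the $u$--$v$ correlation, which is the right quantity only when the uncertain edge is the direct $u$--$v$ edge, not the edges creating the correlation.

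The paper's Ensemble 4 is designed precisely to avoid this: each building block has $\eta_1=\lfloor c\eta\rfloor$ ``center'' nodes arranged in a cycle, with each consecutive pair joined by a \emph{fixed} set of $\eta_2=\lfloor(1-c)\eta\rfloor$ length-two paths and $m$ length-$(\gamma+1)$ paths; the $2^{\binom{\eta_1}{2}}$ hypotheses per block correspond to which \emph{direct edges among the center nodes} are present, while all path edges are common to every hypothesis. By \eqref{eq:L_paths} and a union bound around the cycle, all $\eta_1$ center nodes agree except with probability $\eta_1\big(1+(\cosh 2\lambda)^{\eta_2}(\cdots)^m\big)^{-1}$, so $1-\EE_G[X_iX_j]$ is exponentially small for every center pair, and \eqref{eq:L_div2} (applied toward the fully-connected-centers reference graph) gives the exponentially small KL radius. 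Note also that this places $\binom{\eta_1}{2}$ uncertain edges per block, which is what produces the $(c\eta-1)^2$ factor in the stated $\qmax$; your gadget has only $\Theta(c\eta)$ uncertain edges per copy and compensates by taking more copies, which happens to reproduce the same $\qmax$ formula but is another sign the construction has drifted from the one the theorem is built on.
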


We proceed by considering only the case $\lambda = O(1)$, though simplifications of Theorem \ref{thm:necc_s1} for $\lambda \to \infty$ are also possible.  With $\lambda = O(1)$, we have $\big(\cosh(2\lambda)\big)^{(1-c)\eta} = e^{ \zeta \lambda^2(1-c)\eta}$ for some $\zeta = \Theta(1)$, and similarly $\big( \frac{1+(\tanh\lambda)^{\gamma+1}}{1-(\tanh\lambda)^{\gamma+1}} \big)^{m} = e^{\zeta' m\lambda^{\gamma+1}}$ for some $\zeta' = \Theta(1)$ \cite[Sec.~5]{Sha14}.  These identities reveal that the sample complexity is exponential in both $\lambda^2\eta$ and $\lambda^{\gamma+1}m$. On the other hand, if $\lambda = O\big( \frac{1}{\sqrt{\eta}} \big)$ and $\lambda = O\big( \frac{1}{m^{\frac{1}{\gamma+1}}} \big)$ then the second term in \eqref{eq:s_cond_final} gives $n = \Omega(\max\{\eta,m^{\frac{2}{\gamma+1}}\} \log p)$.

Due to the choice $\qmax = \big\lfloor \theta \frac{(c\eta-1)^2 k}{ 2c\eta( 2\eta + m(\gamma+1) ) } \big\rfloor$, if we set $m = d/2 - \eta$ then we are only in the regime of a constant fraction of errors if $d\gamma = \Theta(\eta)$.  This is true, for example, if $\eta = \Theta(d)$  so that the separator set size is a fixed fraction of the maximum degree, and $\gamma = \Theta(1)$ so that the separation is with respect to paths of a bounded length.  

More generally, to handle larger values of $\qmax$, one can choose a smaller value of $m$, thus leading to a larger value of $\qmax$ but with a less stringent condition on the number of measurements in \eqref{eq:s_cond_final}.  In the extreme case, $m=0$, and then we are always in the regime of a constant proportion of errors; however, this yields a necessary condition $\Omega(\eta\log p)$ not depending on $d$ or $\gamma$.

The graph family studied in \cite[Thm.~2]{Sha14} is somewhat different from $\Gc_{k,d,\eta,\gamma}$, in particular not putting any constraints on the maximal degree nor the number of edges.  Nevertheless, by choosing the parameters in the proof therein to meet these constraints,\footnote{Specifically, in \cite[Sec.~9.2]{Sha14}, one can set $t_{\nu} = d - \eta$ to satisfy the degree constraint, and then choose $\alpha = \big\lfloor \frac{k}{t_{\nu}(\gamma+1) + 2\eta - 1} \big\rfloor$ to ensure there are at most $k$ edges in total.} one again obtains similar conditions to those above, as summarized in Table \ref{tbl:summary}.  In particular, for any choice of $m$ that grows as $\Theta(d)$, the scaling laws for exact recovery and approximate recovery coincide. 

\section{Auxiliary Results} \label{sec:AUXILIARY}

In this section, we provide a number of auxiliary results that will be used to prove the theorems in Section \ref{sec:RESULTS}.  We first present a general form of Fano's depending on both the Kullback-Leibler (KL) divergence and edit distance between graphs, and then provide a number of properties of Ising models that will be useful for characterizing the KL divergence and edit distance in specific scenarios.

\subsection{Fano's Inequality for Approximate Recovery}

As is common in studies of algorithm-independent lower bounds in learning problems, we make use of bounds based on Fano's inequality \cite[Sec.~2.10]{Cov01}.  We first briefly outline the most relevant results for the exact recovery problem.

Recall the definitions of $\pe$ and $\pe(\qmax)$ in \eqref{eq:pe}--\eqref{eq:pe_approx} with respect to a given graph class $\Gc$. It is known that for any subset $\Tc \subseteq \Gc$, and any covering set $C_{\Tc}(\epsilon)$ such that any graph $G \in \Tc$ has an ``$\epsilon$-close'' graph $G' \in C_{\Tc}(\epsilon)$ satisfying $D(P_G \| P_{G'}) \le \epsilon$, we have \cite{Sha14}
\begin{equation}
    \pe \ge 1 - \frac{\log |C_{\Tc}(\epsilon)| + n\epsilon + \log 2}{\log |\Tc|}. \label{eq:Fano_exact1}
\end{equation}
In particular, if $C_{\Tc}(\epsilon)$ is a singleton, solving for $n$ gives the necessary condition
\begin{equation}
    n \ge \frac{\log |\Tc|}{\epsilon}\Big( 1-\delta - \frac{\log 2}{\log|\Tc|}  \Big) \label{eq:Fano_exact2}
\end{equation}
in order to have $\pe \le \delta$.

For approximate recovery, we consider ensembles (i.e., choices of $\Tc$) for which the decoder's outputs may lie in some set $\Tc'$ without loss of optimality; in most cases we will have $\Tc = \Tc'$, but in general, $\Tc'$ need not even be a subset of the graph class $\Gc$.  We use the following generalization of \eqref{eq:Fano_exact2}.

\begin{lem} \label{lem:Fano_approx}
    Suppose that the decoder minimizing the average error probability with respect to a distortion level $\qmax$, averaged over a graph uniformly drawn from a set $\Tc \subseteq \Gc$, always outputs a graph in some set $\Tc'$.  Moreover, suppose that there exists a graph $G'$ such that $D(P_G \| P_{G'}) \le \epsilon$ for all $G \in \Tc$, and that there are at most $A(\qmax)$ graphs in $\Tc'$ within an edit distance $\qmax$ of any given graph $G \in \Tc$.  Then it is necessary that
    \begin{equation}
        n \ge \frac{\log |\Tc| - \log A(\qmax)}{\epsilon}\Big( 1-\delta - \frac{\log 2}{\log|\Tc|}  \Big) \label{eq:Fano_approx}
    \end{equation}
    in order to have $\pe(\qmax) \le \delta$.
\end{lem}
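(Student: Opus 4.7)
The plan is to adapt the standard Fano/data-processing chain underlying \eqref{eq:Fano_exact1}--\eqref{eq:Fano_exact2} by inserting a conditioning step that absorbs the approximate-recovery slack. I would take $G$ uniformly distributed over $\Tc$, let $\Xv$ denote the $n$ i.i.d.\ samples, and let $\hat{G} \in \Tc'$ be the (average-optimal) decoder's output. Writing $E$ for the success event $\{|E_G \Delta E_{\hat{G}}| \le \qmax\}$, the hypothesis $\pe(\qmax) \le \delta$ gives $\PP[E^c] \le \delta$. The argument then assembles two branches: a lower bound $H(G \mid \hat{G}) \ge \log|\Tc| - I(G;\Xv)$ from data processing (since $\hat G$ is a function of $\Xv$), together with an upper bound $H(G \mid \hat{G}) \le H(\mathbf{1}_E) + H(G \mid \hat{G}, \mathbf{1}_E)$ from a Fano-style decomposition on the success indicator.

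For the mutual-information side, the hypothesis $D(P_G \,\|\, P_{G'}) \le \epsilon$ for all $G \in \Tc$ combined with the i.i.d.\ tensorization of KL divergence yields $I(G;\Xv) \le \EE_G[D(P_{\Xv \mid G} \,\|\, P_{\Xv \mid G'})] = n\, \EE_G[D(P_G\,\|\,P_{G'})] \le n\epsilon$. For the conditional-entropy side, $H(\mathbf{1}_E) \le \log 2$; conditioned on $\hat{G} = \hat{g}$ and $\mathbf{1}_E = 1$ the true $G$ lies in the ``inverse ball'' $\{G \in \Tc : d(G,\hat{g}) \le \qmax\}$, whose size is controlled by $A(\qmax)$, while conditioned on $\mathbf{1}_E = 0$ the trivial bound $\log|\Tc|$ suffices. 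Chaining these yields
\begin{equation*}
\log|\Tc| - n\epsilon \;\le\; \log 2 + (1-\delta)\log A(\qmax) + \delta \log|\Tc|,
\end{equation*}
which rearranges to $(1-\delta)\big(\log|\Tc| - \log A(\qmax)\big) \le n\epsilon + \log 2$ and, after dividing by $\epsilon$, produces \eqref{eq:Fano_approx}; the stated factor $\big(1-\delta - \log 2/\log|\Tc|\big)$ is an asymptotically equivalent way of absorbing the residual $\log 2$ constant.

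The point that I expect to need care with is the \emph{direction} of the covering count in the step bounding $H(G \mid \hat{G}, \mathbf{1}_E = 1)$. The hypothesis of the lemma controls $|\{G' \in \Tc' : d(G,G') \le \qmax\}|$ for each $G \in \Tc$, whereas the Fano decomposition naturally asks for the reverse count $|\{G \in \Tc : d(G,\hat{g}) \le \qmax\}|$ for each $\hat{g} \in \Tc'$. Since the edit distance is symmetric, the two quantities coincide whenever $\Tc = \Tc'$, which will essentially be the case in the ensembles constructed later in the paper; in the general case one can either post-process $\hat{G}$ into $\Tc$ at the cost of at most doubling the distortion via the triangle inequality, or verify directly that the chosen ensemble also satisfies the reverse covering bound with the same $A(\qmax)$. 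Once this bookkeeping is in place, the rest is the standard Fano template.
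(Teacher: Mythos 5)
Your argument is correct and follows essentially the same route as the paper's proof: a Fano-type chain-rule decomposition on the error/success indicator, the bound $H(G \mid \hat{G}, \text{success}) \le \log A(\qmax)$ with the trivial bound $\log|\Tc|$ on the failure branch, and $I(G;\hat{G}) \le I(G;\Xv) \le n\epsilon$ via data processing and tensorization against the reference graph $G'$. The covering-direction subtlety you flag is genuine --- the lemma's hypothesis counts graphs of $\Tc'$ near a given $G \in \Tc$ while the entropy bound needs the reverse count, and the paper's proof passes over this silently --- but as you observe it is harmless in every ensemble actually used (either $\Tc' = \Tc$, or the stated bound on $A(\qmax)$ covers the reverse count as well).
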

\begin{proof}
    See Appendix \ref{sec:PF_FANO}.
\end{proof}

\subsection{Properties of Ferromagnetic Ising Models}

We will use a number of useful results on ferromagnetic Ising models, each of which is either self-evident or can be found in \cite{San12} or \cite{Sha14}.  We start with some basic properties.

\begin{lem}
    For any graphs $G$ and $G'$ with edge sets $E$ and $E'$ respectively, we have the following:
    
    (i) For any pair $(i,j)$, we have \cite{San12}
    \begin{equation}
        \EE_G[X_i X_j] = 2\PP_G[X_i = X_j] - 1. \label{eq:L_EtoP}
    \end{equation}
    
    (ii) The divergence between the corresponding distributions satisfies \emph{\cite[Eq.~(4)]{Sha14}}
    \begin{align}
         D(P_G \| P_{G'}) 
             &\le \sum_{(i,j) \in E \backslash E'} \lambda\big( \EE_G[X_i X_j] - \EE_{G'}[X_i X_j] \big) \nonumber \\
                  & \quad + \sum_{(i,j) \in E' \backslash E} \lambda\big( \EE_{G'}[X_i X_j] - \EE_G[X_i X_j] \big) \label{eq:L_div} \\
             &\le \sum_{(i,j) \in E \backslash E'} \lambda\big( 1 - \EE_{G'}[X_i X_j] \big) \nonumber \\
                 &\quad + \sum_{(i,j) \in E' \backslash E} \lambda\big( 1 - \EE_G[X_i X_j] \big). \label{eq:L_div2}
    \end{align}
    
    (iii) If $E' \subset E$, then we have for any pair $(i,j)$ that \emph{\cite[Eq.~(13)]{Sha14}}
    \begin{equation}
        \EE_{G}[X_i X_j] \ge \EE_{G'}[X_i X_j]. \label{eq:L_subgraph}
    \end{equation}
    
    (iv) Let $(V_1,\dotsc,V_K)$ be a partition of $V$ into $K$ disjoint non-empty subsets. If $G$ and $G'$ are such that there are no edges between nodes in $V_i$ and $V_j$ when $i \ne j$, then 
    \begin{equation}
        D(P_G \| P_{G'}) = \sum_{i=1}^K D(P_{G_i} \| P_{G'_i}), \label{eq:L_disjoint}
    \end{equation}
    where $G_i = (V,E_i)$, with $E_i$ containing the edges in $E$ between nodes in $V_i$ (and analogously for $G'_i$).    
\end{lem}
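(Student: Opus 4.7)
The plan is to verify the four parts essentially independently, since the hypotheses for each are distinct and the computations are largely self-contained.

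For part (i), I would just compute directly from the fact that $X_i, X_j \in \{-1,+1\}$: the product $X_iX_j$ equals $+1$ when the two agree and $-1$ when they disagree, so $\EE_G[X_iX_j] = \PP_G[X_i = X_j] - \PP_G[X_i \ne X_j] = 2\PP_G[X_i = X_j] - 1$. This is immediate with no further work.

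For part (ii), I would start by writing the log-likelihood ratio using the exponential-family form of \eqref{eq:Ising}:
\begin{equation*}
\log \frac{P_G(x)}{P_{G'}(x)} = \sum_{(i,j) \in E \setminus E'} \lambda x_ix_j - \sum_{(i,j) \in E' \setminus E} \lambda x_ix_j - \log \frac{Z_G}{Z_{G'}}.
\end{equation*}
Taking expectation under $P_G$ yields an expression involving $\EE_G[X_iX_j]$ and the log ratio of partition functions. The standard trick is to add $D(P_{G'}\|P_G) \ge 0$ to $D(P_G\|P_{G'})$, which cancels the partition-function contributions and leaves only differences of expectations; discarding $D(P_{G'}\|P_G)$ yields the upper bound \eqref{eq:L_div}. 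The bound \eqref{eq:L_div2} then follows by bounding $\EE[X_iX_j] \le 1$ on the ``subtracted'' terms.

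Part (iii) is the monotonicity of ferromagnetic correlations under the addition of couplings, which is the second Griffiths (GKS) inequality. Since every $\lambda_{ij} \ge 0$ in our setting, adding edges to $E'$ to obtain $E$ corresponds to increasing some couplings from $0$ to $\lambda > 0$, and GKS gives $\partial \EE[X_iX_j]/\partial \lambda_{kl} \ge 0$; integrating over the added couplings yields \eqref{eq:L_subgraph}. I would simply invoke this classical inequality rather than reprove it.

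The main non-routine step is part (iv), and I would handle it via direct factorization. Under the stated assumption, the Hamiltonian of $G$ decomposes as $\sum_\ell \sum_{(i,j) \in E_\ell} \lambda x_ix_j$, and because the sums over variables in distinct $V_\ell$ factor, the partition function satisfies $Z_G = \prod_\ell Z_{G_\ell}^{(V_\ell)}$, where $Z_{G_\ell}^{(V_\ell)}$ is the partition function of the Ising model restricted to $V_\ell$. Consequently, $P_G$ factors as a product of marginals on the blocks $V_\ell$, and likewise for $P_{G'}$. Since with $G_\ell = (V, E_\ell)$ the distribution $P_{G_\ell}$ is simply this marginal tensored with uniform distributions on $V \setminus V_\ell$ (and these uniform factors contribute zero divergence), the additivity \eqref{eq:L_disjoint} follows from the standard identity that KL divergence is additive over product measures. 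The only subtlety is the convention $G_\ell = (V, E_\ell)$ vs.\ a distribution purely on $V_\ell$; I would spell out briefly that the uniform factors on $V \setminus V_\ell$ contribute zero divergence so the two conventions give the same value.
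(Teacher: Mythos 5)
Your proposal is correct; note that the paper itself offers no proof of this lemma, explicitly describing each part as ``either self-evident or can be found in [San12] or [Sha14],'' and your arguments (the $\{-1,+1\}$ identity for (i), the symmetrization trick $D(P_G\|P_{G'}) \le D(P_G\|P_{G'}) + D(P_{G'}\|P_G)$ to cancel the partition functions for (ii), the second Griffiths inequality for (iii), and factorization of $Z$ plus additivity of KL over product measures for (iv)) are precisely the standard derivations underlying those citations. One cosmetic remark: in passing from \eqref{eq:L_div} to \eqref{eq:L_div2} the bound $\EE[X_iX_j]\le 1$ is applied to the \emph{retained} (positive) expectation in each difference rather than to the subtracted one, but this does not affect the validity of your argument.
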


The remaining properties concern the probabilities, expectations and divergences associated with more specific graphs.    
    
\begin{lem}
    (i) If $G'$ is obtained from $G$ by removing a single edge $(i,j)$, then \emph{\cite[Eq.~(19)]{Sha14}}
    \begin{equation}
        \frac{\PP_G[X_i = X_j]}{1-\PP_G[X_i = X_j]} = e^{2\lambda} \frac{\PP_{G'}[X_i = X_j]}{1-\PP_{G'}[X_i = X_j]} \label{eq:L_single_edge_p}
    \end{equation}
    and \emph{\cite[Lemma 4]{Sha14}}
    \begin{equation}
        D(P_G \| P_{G'}) \le \lambda\tanh\lambda. \label{eq:L_single_edge_div}
    \end{equation}
    
    (ii) Let $G$ contain a clique on $m \ge 2$ nodes and no other edges, and let $G'$ be obtained from $G$ by removing a single edge $(i,j)$.  Then, defining $\mbar := m - 1$, we have \emph{\cite[Eq.~(31)]{San12}}
    \begin{multline}
        \frac{ \PP_{G'}[X_i = X_j] }{1-\PP_{G'}[X_i = X_j]} \\ = \frac{ \sum_{j=0}^{\mbar} {\mbar \choose j} \exp\big( \frac{\lambda}{2}(2j - \mbar)^2 \big) \exp\big( 2\lambda(2j - \mbar) \big) }{ \sum_{j=0}^{\mbar} {\mbar \choose j} \exp\big( \frac{\lambda}{2}(2j - \mbar)^2 \big) }. \label{eq:L_clique_p}
    \end{multline}
    Moreover, we have \emph{\cite[Lemma 1]{San12}}
    \begin{equation}
        \PP_{G'}[X_i = X_j] \ge 1 - \frac{\mbar}{\mbar + e^{\mbar\lambda/2}} \label{eq:L_clique_p2}
    \end{equation}
    and
    \begin{equation}
        \EE_{G'}[X_i X_j] \ge 1 - \frac{2\mbar e^{\lambda}}{e^{\mbar\lambda} + \mbar e^{\lambda}}.  \label{eq:L_clique_e}
    \end{equation}
    
    (iii) Suppose that for some edge $(i,j) \in E \Delta E'$, there exist at least $m$ node-disjoint paths of length $\ell$ between $i$ and $j$ in $G$.  Then \emph{\cite[Lemma 3]{Sha14}}
    \begin{equation}
        \EE_G[X_iX_j] \ge 1 - \frac{ 2 }{ 1 + \Big( \frac{1 + (\tanh\lambda)^{\ell}}{1 - (\tanh\lambda)^{\ell}} \Big)^m }. \label{eq:L_path_e}
    \end{equation}
    If the same is true in both $G$ and $G'$ for all $(i,j) \in E \Delta E'$, then \emph{\cite[Cor.~3]{Sha14}}
    \begin{equation}
        D(P_G \| P_{G'}) \le \frac{ 2\lambda|E \Delta E'| }{ 1 + \Big( \frac{1 + (\tanh\lambda)^{\ell}}{1 - (\tanh\lambda)^{\ell}} \Big)^m }. \label{eq:L_path_div}
    \end{equation}
    
    (iv) More generally, if there exist at least $m_l$ node-disjoint paths of length $\ell_l$ between $(i,j)$ for $l=1,\dotsc,L$, where the values of $\ell_l$ are all distinct, then
    \begin{equation}
        \EE_G[X_iX_j] \ge 1 - \frac{ 2 }{ 1 + \prod_{l=1}^L \Big( \frac{1 + (\tanh\lambda)^{\ell_l}}{1 - (\tanh\lambda)^{\ell_l}} \Big)^{m_l} }. \label{eq:L_paths}
    \end{equation}
\end{lem}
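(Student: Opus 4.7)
My plan is to closely mirror the argument that establishes part~(iii) of the same lemma, replacing the single transfer-matrix factor with a product over path lengths. First I would invoke the ferromagnetic monotonicity \eqref{eq:L_subgraph} to reduce to the subgraph $G_0 \subset G$ consisting only of the $\sum_{l=1}^L m_l$ given node-disjoint paths: since $G_0$ is a subgraph of $G$, we have $\EE_G[X_iX_j] \ge \EE_{G_0}[X_iX_j]$, so it suffices to prove the claimed bound under $P_{G_0}$.

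Next I would marginalize out the internal spins of each path. Because the paths share no vertices other than the endpoints $i$ and $j$, conditional on $(X_i,X_j)$ the internal spins of distinct paths are mutually independent, so marginalizing them out produces a distribution on $(X_i,X_j)$ that factorizes across paths. For a single isolated path of length $\ell$, a standard Ising-chain transfer-matrix computation (equivalently, the calculation already contained in the proof of \eqref{eq:L_path_e}) gives a per-path factor $a_\ell$ when $X_i = X_j$ and $b_\ell$ when $X_i \ne X_j$, with
\[
\frac{a_\ell}{b_\ell} \;=\; \frac{1 + (\tanh\lambda)^{\ell}}{1 - (\tanh\lambda)^{\ell}}.
\]
Multiplying these per-path factors over the $m_l$ copies of length $\ell_l$ for each $l = 1,\dotsc,L$ yields
\[
\frac{\PP_{G_0}[X_i = X_j]}{\PP_{G_0}[X_i \ne X_j]} \;=\; \prod_{l=1}^L \Big( \frac{1 + (\tanh\lambda)^{\ell_l}}{1 - (\tanh\lambda)^{\ell_l}} \Big)^{m_l} \;=:\; O.
\]

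Finally, converting odds to expectation via $\PP_{G_0}[X_i = X_j] = O/(O+1)$ together with \eqref{eq:L_EtoP} gives $\EE_{G_0}[X_iX_j] = (O-1)/(O+1) = 1 - 2/(O+1)$, which is exactly the right-hand side of \eqref{eq:L_paths} and completes the proof once combined with the monotonicity step.

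The step I expect to require the most care is justifying cleanly that the per-path factors multiply across paths of \emph{different} lengths. This relies on two ingredients: node-disjointness (except at the endpoints), which ensures that after marginalizing internal spins the path contributions enter the joint law on $(X_i,X_j)$ as a product; and the fact that any single-path transfer-matrix power has the symmetric form with equal diagonal entries, so each per-path factor depends on $(X_i,X_j)$ only through whether $X_i = X_j$. With these two points spelled out, the argument becomes a direct product-generalization of the single-length case already handled in the proof of \eqref{eq:L_path_e}.
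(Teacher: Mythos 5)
Your proof is correct and is precisely the argument the paper intends: part~(iv) is stated without proof as the natural generalization of part~(iii) (whose proof in \cite{Sha14} uses exactly the reduction via \eqref{eq:L_subgraph}, the conditional independence of node-disjoint paths given the endpoint spins, and the transfer-matrix odds ratio $\frac{1+(\tanh\lambda)^{\ell}}{1-(\tanh\lambda)^{\ell}}$ per path). Your observation that each $T^{\ell}$ has equal diagonal entries, so the per-path factor depends on $(X_i,X_j)$ only through the event $\{X_i=X_j\}$ and the factors therefore multiply into the stated odds $O$, is the key point and you have it right.
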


\section{Graph Ensembles and Lower Bounds on their Sample Complexities} \label{sec:ENSEMBLES}

In this section, we provide necessary conditions for the approximate recovery of a number of ensembles, making use of the tools from the previous section.  In particular, we seek choices of $\Tc$, $\Tc'$ and $A(\qmax)$ for substitution into Fano's inequality in Lemma \ref{lem:Fano_approx}.  In Section \ref{sec:APPLICATIONS}, we use these to establish our main theorems.

\subsection{Ensemble 1: Many Isolated Edges}

\noindent This ensemble contains numerous isolated edges, such that if $\lambda$ is small then it is difficult to determine precisely which ones are present. It is constructed as follows with some integer parameter $\alpha \le p/4$:

\medskip
\noindent\fbox{
    \parbox{0.95\columnwidth}{
        \textbf{Ensemble1($\alpha$) [Isolated edges ensemble]:}
        \begin{itemize}
            \item Each graph in $\Tc$ is obtained by forming exactly $\alpha$ node-disjoint edges that may otherwise be arbitrary.
        \end{itemize}
    }
} \medskip

For this ensemble, we have the following properties:
\begin{itemize}
    \item The number of graphs is $|\Tc| = \prod_{i=0}^{\alpha} {p - 2i \choose 2} \ge {\lfloor p/2 \rfloor \choose 2}^{\alpha}$, since $p-2\alpha \ge \frac{p}{2}$ by the assumption $\alpha \le p/4$.
    \item The maximum degree of each graph is one.
    \item For this ensemble, it suffices to trivially let $\Tc'$ contain all graphs.
    \item The number of graphs within an edit distance $\qmax$ of any single graph is upper bounded as $A(\qmax) \le \sum_{q=0}^{\qmax} \sum_{q'=0}^{\qmax - q} {\alpha \choose q} {p \choose 2}^{q'} \le (1 + \qmax)^2{\alpha \choose \lfloor\alpha/2 \rfloor} {p \choose 2}^{\qmax}$.  Here the term ${\alpha \choose q}$ corresponds to choosing $q$ edges to remove, and the term ${p \choose 2}^{q'}$ upper bounds the number of ways to add $q' \le \qmax - q$ new edges.  We have also used the fact that ${\alpha \choose q}$ is maximized at $q = \lfloor \alpha/2 \rfloor$.
    \item From \eqref{eq:L_single_edge_div}, the KL divergence from a single-edge graph to the empty graph is upper bounded by $\lambda \tanh\lambda$.  Using this fact along with \eqref{eq:L_disjoint}, any graph in $\Tc$ has a KL divergence to the empty graph of at most $\epsilon = \alpha\lambda\tanh\lambda$.
\end{itemize}
Combining these with \eqref{eq:Fano_approx} gives the necessary condition
\begin{multline}
    n \ge \frac{\alpha \log {\lfloor p/2 \rfloor \choose 2} - \log \Big( (1 + \qmax)^2 {\alpha \choose \lfloor \alpha/2 \rfloor} {p \choose 2}^{\qmax}\Big) }{ \alpha\lambda\tanh\lambda } \\ \times\Big( 1-\delta - \frac{\log 2}{|\Tc|} \Big) \label{eq:ens1_init}
\end{multline}
in order to have $\pe(\qmax) \le \delta$.

Simplifying both $\log { \lfloor p/2 \rfloor \choose 2}$ and $\log {p \choose 2}$ to $(2\log p)(1+o(1))$, and writing $\log{\alpha \choose \lfloor \alpha/2 \rfloor} \le \alpha \log 2 = o(\alpha \log p)$ as well as $\log(1+\qmax)^2 \le 2\log(1+\alpha) = o(\alpha \log p)$, we can simplify \eqref{eq:ens1_init} to
\begin{align}
    n &\ge \frac{2\alpha \log {p} - 2\qmax \log {p} }{ \alpha\lambda\tanh\lambda }\Big( 1-\delta - o(1) \Big), \label{eq:ens1_init2}
\end{align}
provided that $\alpha \to \infty$ and $\qmax \le (1 - \Omega(1))\alpha$. Letting $\qmax = \lfloor\theta_1 \alpha\rfloor$ for some $\theta_1 \in (0,1)$, this becomes
\begin{equation}
    n \ge \frac{2(1 - \theta_1)\log {p} }{ \lambda\tanh\lambda }\Big( 1-\delta - o(1) \Big). \label{eq:ens1}
\end{equation}

\subsection{Ensemble 2: Many Isolated Groups of Nodes}

\noindent As an alternative to Ensemble 1, this ensemble allows for significantly more edges, in particular permitting $k=\omega(p)$.  It is constructed as follows with integer parameters $m$ and $\alpha$:

\medskip
\noindent\fbox{
    \parbox{0.95\columnwidth}{
       \textbf{Ensemble2($m$,$\alpha$) [Isolated cliques ensemble]:}
        \begin{itemize}
            \item Form $\alpha$ fixed groups of nodes, each containing $m$ nodes.
            \item Each graph in $\Tc$ is formed by forming arbitrarily many edges within each group, but no edges between the groups.
        \end{itemize}
    }
} \medskip

For this ensemble, we have the following:
\begin{itemize}
    \item The number of nodes forming these groups is $m\alpha$.
    \item The total number of possible edges is $\alpha{m \choose 2}$, and hence the total number of graphs is $|\Tc| = 2^{\alpha{m \choose 2}}$.
    \item The maximal degree of each graph is at most $m-1$.
    \item The decoder can output an element of $\Tc$ without loss of optimality, since any inter-group edges declared to be present are guaranteed to be wrong.  Thus, we may set $\Tc' = \Tc$.
    \item The number of graphs within an edit distance $\qmax$ of any single graph is $A(\qmax) = \sum_{q=0}^{\qmax}{ \alpha{m \choose 2} \choose q } \le 1 + \qmax{ \alpha{m \choose 2} \choose \qmax  }$, assuming $\qmax \le \frac{1}{2} \alpha{m \choose 2}$.
    \item In Lemma \ref{lem:groups_ens} below, we show that the KL divergence of the graph associated with one group to the corresponding empty graph is upper bounded by ${m \choose 2} \lambda \frac{e^{2\lambda}\cosh(2\lambda m) - 1}{e^{2\lambda}\cosh(2\lambda m) + 1}$.  Hence, the KL divergence of any $G \in \Tc$ to the empty graph is upper bounded by $\epsilon = \alpha{m \choose 2} \lambda  \frac{e^{2\lambda}\cosh(2\lambda m) - 1}{e^{2\lambda}\cosh(2\lambda m) + 1}$ due to \eqref{eq:L_disjoint}.  
\end{itemize}
Substituting these into \eqref{eq:Fano_approx}, setting $\qmax = \lfloor\theta_2\alpha{m \choose 2}\rfloor$ for some $\theta_2 \in \big(0,\frac{1}{2}\big)$, and applying some simplifications, we obtain the following necessary condition for $\pe(\qmax) \le \delta$:
\begin{equation}
    n \ge \frac{ \log 2 - H_2(\theta_2) }{ \lambda \frac{e^{2\lambda}\cosh(2\lambda m) - 1}{e^{2\lambda}\cosh(2\lambda m) + 1} } \big(1 - \delta - o(1)\big), \label{eq:ens2}
\end{equation}
whenever $\alpha{m \choose 2} \to \infty$.  Note that the binary entropy function arises from the identity ${N \choose \lfloor\theta N\rfloor} = e^{nH_2(\theta)(1+o(1))}$ as $N\to\infty$.

It remains to prove the claim on the KL divergence, formalized as follows.

\begin{lem} \label{lem:groups_ens}
    Let $G$ denote an arbitrary graph with edges connected to at most $m \ge 2$ nodes, and let $G'$ be the empty graph.  Then, it holds that
    \begin{equation}
        D(P_G \| P_{G'}) \le  {m \choose 2} \lambda \frac{e^{2\lambda}\cosh(2\lambda m) - 1}{e^{2\lambda}\cosh(2\lambda m) + 1}. \label{eq:groups_div}
    \end{equation}
\end{lem}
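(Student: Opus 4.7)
The plan is to first apply the bound \eqref{eq:L_div}, then reduce to the worst case where $G$ is the full clique on $m$ nodes using monotonicity, and finally use the single-edge removal formula together with an explicit symmetry argument to control the resulting clique correlation.

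First, since $G'$ is empty we have $\EE_{G'}[X_iX_j] = 0$ for all pairs, so \eqref{eq:L_div} reduces to
\begin{equation*}
    D(P_G \| P_{G'}) \le \lambda \sum_{(i,j) \in E} \EE_G[X_i X_j].
\end{equation*}
Let $V_0 \subseteq V$ denote the (at most $m$) nodes incident to edges of $G$, and let $G^\star$ be the graph whose edge set $E^\star$ forms the complete clique on $V_0$. Then $E \subseteq E^\star$, so by \eqref{eq:L_subgraph} we have $\EE_G[X_iX_j] \le \EE_{G^\star}[X_iX_j]$, and by the vertex-transitive symmetry of $G^\star$ the latter equals a common value $\bmu \defeq \EE_{G^\star}[X_1X_2]$. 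This gives
\begin{equation*}
    D(P_G\|P_{G'}) \le \lambda |E| \, \bmu \le {m \choose 2} \lambda \, \bmu.
\end{equation*}

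It therefore suffices to prove the bound $\bmu \le \frac{e^{2\lambda}\cosh(2\lambda m)-1}{e^{2\lambda}\cosh(2\lambda m)+1}$ on the clique correlation. Using \eqref{eq:L_EtoP}, set $p \defeq \PP_{G^\star}[X_1 = X_2] = (1+\bmu)/2$; then the desired bound is equivalent to $\frac{p}{1-p} \le e^{2\lambda}\cosh(2\lambda m)$. Let $G^{\star\star}$ denote $G^\star$ with the edge $(1,2)$ removed, so that $G^{\star\star}$ is the ``clique with one edge removed'' graph covered by \eqref{eq:L_clique_p}. Applying the single-edge identity \eqref{eq:L_single_edge_p} to $G^\star$ and $G^{\star\star}$ gives
\begin{equation*}
    \frac{p}{1-p} = e^{2\lambda}\,\frac{\PP_{G^{\star\star}}[X_1=X_2]}{1-\PP_{G^{\star\star}}[X_1=X_2]},
\end{equation*}
so it remains to show that the right-hand ratio in \eqref{eq:L_clique_p} is at most $\cosh(2\lambda m)$.

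Writing $\mbar = m-1$, the denominator in \eqref{eq:L_clique_p} is invariant under the substitution $j \mapsto \mbar - j$, which maps $2j-\mbar$ to $-(2j-\mbar)$. Exploiting this symmetry in the numerator, I would replace $\exp(2\lambda(2j-\mbar))$ by its average with $\exp(-2\lambda(2j-\mbar))$, namely $\cosh(2\lambda(2j-\mbar))$. Since $|2j - \mbar| \le \mbar < m$, each term satisfies $\cosh(2\lambda(2j-\mbar)) \le \cosh(2\lambda m)$, and the ratio collapses to at most $\cosh(2\lambda m)$, completing the proof.

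The only nontrivial step is the final symmetrization of \eqref{eq:L_clique_p}; the rest is an application of previously stated properties. If one prefers to avoid \eqref{eq:L_clique_p} altogether, an equivalent route is to write $\PP_{G^{\star\star}}[X_1=X_2]/\PP_{G^{\star\star}}[X_1 \ne X_2]$ directly as $\EE[e^{2\lambda H}]$ for a symmetric random variable $H$ (the net sum of edges connecting nodes $1,2$ to the remaining $m-2$ clique nodes under $G^{\star\star}$), use symmetry to convert $e^{2\lambda H}$ to $\cosh(2\lambda H)$, and then bound $|H| \le m-2 < m$.
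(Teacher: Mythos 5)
Your proposal is correct and follows essentially the same route as the paper: reduce to the clique correlation, combine the single-edge identity \eqref{eq:L_single_edge_p} with the clique formula \eqref{eq:L_clique_p}, symmetrize $j \mapsto \mbar - j$ to replace the exponential by $\cosh(2\lambda(2j-\mbar))$, and bound by $\cosh(2\lambda m)$ before substituting into \eqref{eq:L_div}. The only (welcome) difference is that you handle the general non-clique case explicitly via the monotonicity property \eqref{eq:L_subgraph}, whereas the paper proves the single-clique case and only sketches that reduction.
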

\begin{proof}
    We prove the claim for the case that $G$ contains a single $m$-clique; the general case then follows in a similar fashion using \eqref{eq:L_subgraph}.  
    
    Let $\Gbar$ be obtained from $G$ by removing a single edge, say indexed by $(i,j)$.  Defining $q(G) := \PP_G[X_i = X_j]$ and $\mbar := m-1$, we have from \eqref{eq:L_single_edge_p} that
    \begin{equation}
        \frac{q(G)}{1-q(G)} = e^{2\lambda} \frac{q(\Gbar)}{1-q(\Gbar)}, \label{eq:one_edge_diff}  
   \end{equation} 
    and from \eqref{eq:L_clique_p} that
    \begin{equation}
        \frac{q(\Gbar)}{1-q(\Gbar)} = \frac{ \sum_{j=0}^{\mbar} {\mbar \choose j} \exp\big( \frac{\lambda}{2}(2j - \mbar)^2 \big) \exp\big( 2\lambda(2j - \mbar) \big) }{ \sum_{j=0}^{\mbar} {\mbar \choose j} \exp\big( \frac{\lambda}{2}(2j - \mbar)^2 \big) }.
    \end{equation}
    Noting the symmetry of the summands with respect to $j$ and $\mbar - j$, we obtain the following when $\mbar$ is odd (the case that $\mbar$ is even is handled similarly, leading to the same conclusion):
    \begin{align}
        &\frac{q(\Gbar)}{1-q(\Gbar)} \nonumber \\
            &= \frac{ \sum_{j=0}^{\lfloor\mbar/2\rfloor} {\mbar \choose j} \exp\big( \frac{\lambda}{2}(2j - \mbar)^2 \big) \cdot 2\cosh\big( 2\lambda(2j - \mbar) \big) }{ 2\sum_{j=0}^{\lfloor\mbar/2\rfloor} {\mbar \choose j} \exp\big( \frac{\lambda}{2}(2j - \mbar)^2 \big) } \\
            &\le \max_{j=0,\dotsc,\lfloor\mbar/2\rfloor} \cosh\big( 2\lambda(2j - \mbar) \big) \\
            &= \cosh\big( 2\lambda\mbar \big) \\
            &\le \cosh\big( 2\lambda m \big). \label{eq:groups_ens4}
    \end{align}
    Substituting \eqref{eq:groups_ens4} into \eqref{eq:one_edge_diff}, solving for $q(G)$, and converting from probability to expectation via \eqref{eq:L_EtoP}, we obtain
    \begin{equation}
        \EE_G[X_iX_j] \le \frac{e^{2\lambda}\cosh(2\lambda m) - 1}{e^{2\lambda}\cosh(2\lambda m) + 1}.
    \end{equation}
    The proof is concluded by substituting into \eqref{eq:L_div} and noting that $\EE_{G'}[X_iX_j]=0$,  $|E \backslash E'| = {m \choose 2}$, and $|E' \backslash E| = 0$.
\end{proof}

\subsection{Ensemble 3: Large Inter-Connected Cliques} \label{sec:ENS3}

\noindent This ensemble involves cliques with numerous edges between them, making it difficult to determine precisely which inter-clique connections are present, particularly for large cliques and large values of $\lambda$.  It is constructed as follows with integer parameters $m$ and $\alpha$:

\medskip
\noindent\fbox{
    \parbox{0.95\columnwidth}{
        \textbf{Ensemble3($m$,$\alpha$) [Inter-connected cliques ensemble]:}
        \begin{itemize}
            \item Construct a fixed ``building block'' as follows: Take an arbitrary subset of the $p$ vertices of size $2m$, split the $2m$ vertices into two sets of size $m$ each, fully connect each of those sets, and then put $m$ extra edges between the two sets in a fixed but arbitrary one-to-one fashion.
            \item Form $\alpha$ disjoint copies of this building block to obtain a base graph $G_0$.
            \item Each graph in $\Tc$ is formed by taking $G_0$ and adding an arbitrary number of additional edges between each pair of partially-connected cliques.  Thus, $G_0$ itself contains the fewest edges within $\Tc$, and the union of $\alpha$ cliques of size $2m$ contains the most edges.
        \end{itemize}
    }
} \medskip

An illustration of one building block is given in Figure \ref{fig:ens3}.

For this ensemble, we have the following:
\begin{itemize}
    \item The number of nodes forming these groups is $2m\alpha$, and the number of edges in each graph is upper bounded by $\alpha {2m \choose 2} \le 2\alpha m^2$.
    \item The number of potential edges between two $m$-cliques is $m^2$, and $m$ of them are always there in each building block.  Hence, the number of ways of adding edges to one building block is $2^{m(m - 1)}$, and the total number of graphs is $2^{\alpha m(m - 1)}$.
    \item The maximal degree of each graph is at most $2m-1$.
    \item Similarly to Ensemble 2, the decoder can output an element of $\Tc$ without loss of optimality, so that $\Tc' = \Tc$.
    \item The number of graphs within an edit distance $\qmax$ of any single graph is $A(\qmax)=\sum_{q=0}^{\qmax}{ \alpha m(m - 1) \choose q } \le 1 + \qmax{ \alpha m(m - 1) \choose \qmax  }$, assuming $\qmax \le \frac{1}{2} \alpha m(m-1)$.
    \item In Lemma \ref{lem:large_clique_ens} below, we show that the KL divergence of the graph associated with one group to the $2m$-clique graph is upper bounded by $12\lambda m^4 e^{-\lambda(m-1)/2}$.  Thus, the KL divergence from any $G \in \Tc$ to the union of $\alpha$ $2m$-cliques is upper bounded by $\epsilon = 12\lambda \alpha m^4 e^{-\lambda(m-1)/2}$ due to \eqref{eq:L_disjoint}.
\end{itemize}

\begin{figure}
    \begin{centering}
        \hspace{3cm}\includegraphics[width=0.8\columnwidth]{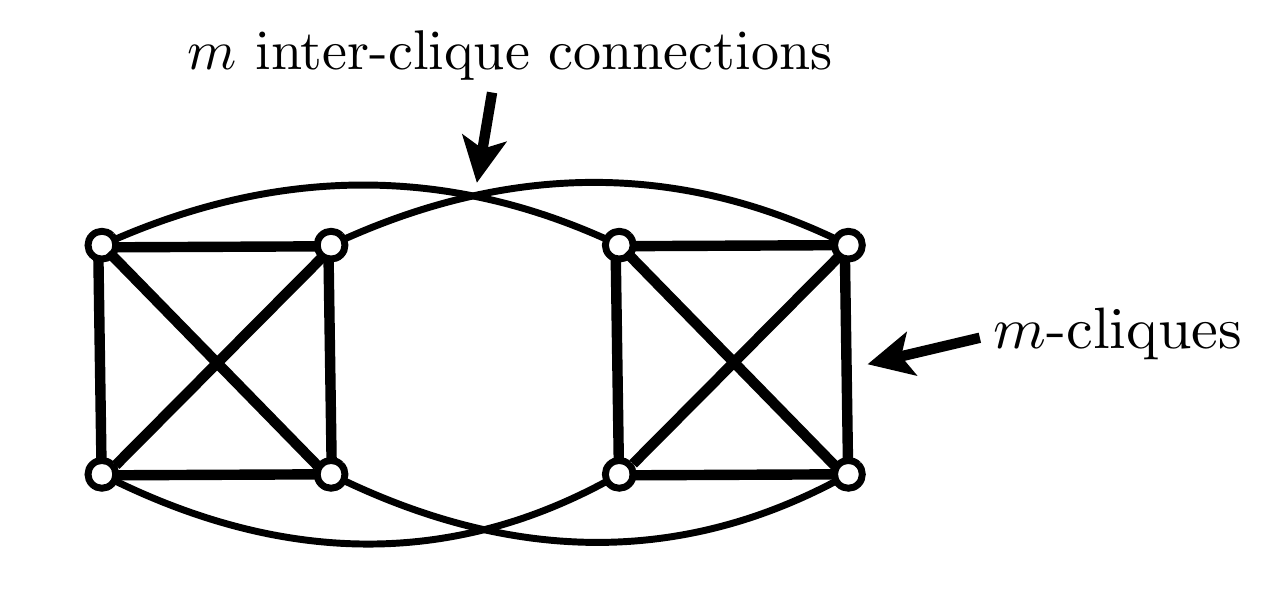}
        \par
    \end{centering}
    
    \caption{Building block for Ensemble 3 with $m = 4$. \label{fig:ens3}}
\end{figure}

Substituting these into \eqref{eq:Fano_approx}, setting $\qmax = \lfloor\theta_3\alpha m(m-1)\rfloor$ for some $\theta_3 \in \big(0,\frac{1}{2}\big)$, and simplifying, we obtain
\begin{equation}
    n \ge \frac{ e^{\lambda(m-1)/2} \big( \log 2  - H_2(\theta_3) \big) }{ 12\lambda m^2 }\big(1 - \delta - o(1)\big), \label{eq:ens3}
\end{equation}
whenever $\alpha m(m-1) \to\infty$.

It remains to prove the claim on the KL divergence, formalized as follows.

\begin{lem} \label{lem:large_clique_ens}
    Let $G$ denote the graph corresponding to a single group in Ensemble 3, and let $G'$ be the corresponding graph containing a $2m$-clique.  Then
    \begin{equation}
        D(P_G \| P_{G'}) \le 12\lambda m^4 e^{-\lambda(m-1)/2}. \label{eq:clique_div}
    \end{equation}
\end{lem}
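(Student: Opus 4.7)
The plan is to apply \eqref{eq:L_div2}, which thanks to $E \subset E'$ reduces to
\begin{equation*}
D(P_G \| P_{G'}) \le \lambda \sum_{(i,j) \in E' \setminus E} \bigl(1 - \EE_G[X_i X_j]\bigr).
\end{equation*}
A direct count gives $|E' \setminus E| = \binom{2m}{2} - 2\binom{m}{2} - m = m(m-1)$ (the inter-clique pairs not already in the fixed matching). Writing $1 - \EE_G[X_i X_j] = 2\PP_G[X_i \ne X_j]$ via \eqref{eq:L_EtoP}, it suffices to show $\PP_G[X_i \ne X_j] = O\bigl(m^2 e^{-(m-1)\lambda/2}\bigr)$ for each such pair $(i,j)$, where $i$ lies in one clique and $j$ in the other.

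The key idea is to decompose on the event $A$ that both $m$-cliques achieve consensus, i.e., all nodes of clique $\ell$ share a common value $s_\ell \in \{\pm 1\}$ for $\ell \in \{1,2\}$. Then
\begin{equation*}
\PP_G[X_i \ne X_j] \le \PP_G[A^c] + \PP_G[s_1 \ne s_2 \mid A].
\end{equation*}
To bound $\PP_G[A^c]$, I fix a reference vertex in each clique and union-bound over the remaining $m-1$ vertices that might disagree with it. For each resulting intra-clique pair $(u,v)$, I apply \eqref{eq:L_subgraph} twice---first to strip every edge outside that clique (leaving an isolated $m$-clique), then to strip the edge $(u,v)$ itself---so that \eqref{eq:L_clique_p2} bounds $\PP_G[X_u \ne X_v]$ by $(m-1)e^{-(m-1)\lambda/2}$, giving $\PP_G[A^c] \le 2(m-1)^2 e^{-(m-1)\lambda/2}$. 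For the second term, conditional on $A$ the four consensus patterns $(s_1,s_2) \in \{\pm 1\}^2$ carry unnormalized $G$-weights proportional to $\exp(m\lambda s_1 s_2)$, since the intra-clique contributions are constant across these four configurations and only the $m$ matching edges depend on the signs; hence $\PP_G[s_1 \ne s_2 \mid A] = (1 + e^{2m\lambda})^{-1} \le e^{-2m\lambda}$.

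Since $e^{-2m\lambda} \le e^{-(m-1)\lambda/2}$ whenever $m \ge 2$, the two pieces combine to $\PP_G[X_i \ne X_j] \le 3m^2 e^{-(m-1)\lambda/2}$, and the claim follows upon multiplying by $2\lambda m(m-1) \le 2\lambda m^2$, with ample slack in the constant $12$. The main conceptual obstacle is choosing the right decomposition: a direct computation of $\EE_G[X_iX_j]$ on the two-clique-plus-matching graph is unwieldy, but conditioning on $A$ cleanly separates the two competing error mechanisms---intra-clique consensus failure (dominant, on the order $e^{-m\lambda/2}$) versus opposite consensus signs across the two cliques (sub-dominant, on the order $e^{-2m\lambda}$, reflecting that the $m$ matching edges collectively act as a reinforced bridge)---after which each mechanism reduces to a standard application of subgraph monotonicity \eqref{eq:L_subgraph} and the clique estimate \eqref{eq:L_clique_p2}.
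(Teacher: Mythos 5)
Your proposal is correct and follows essentially the same route as the paper's proof: both decompose the error into intra-clique consensus failure (bounded via \eqref{eq:L_subgraph} and \eqref{eq:L_clique_p2} with a union bound over $2(m-1)$ pairs) and cross-clique sign disagreement given consensus (computed exactly as $(1+e^{2m\lambda})^{-1}$ from the ratio $e^{2m\lambda}$ of the matching-edge weights), then substitute into \eqref{eq:L_div2}. The only differences are cosmetic bookkeeping: you count $|E'\setminus E| = m(m-1)$ where the paper uses the cruder $\binom{2m}{2}\le 2m^2$, and you add the two failure probabilities where the paper multiplies the two success probabilities.
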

\begin{proof}
    We focus on the case that $G$ is the building block obtained by forming two cliques of size $m$ and connecting $m$ edges between them; the case that further edges are present is handled similarly using \eqref{eq:L_subgraph}.

    From \eqref{eq:L_subgraph} and \eqref{eq:L_clique_p2}, we have for any $(i,j)$ within either of the two $m$-cliques that
    \begin{align}
        \PP_G[X_i = X_j] 
            &\ge 1 - \frac{\mbar}{\mbar + e^{\mbar\lambda/2}} \\
            &\ge 1 - \frac{m}{m + e^{\mbar\lambda/2}} ,
    \end{align}
    where $\mbar := m-1$.  By taking an arbitrary node from each clique and applying the union bound over the $2(m-1) \le 2m$ events corresponding to other nodes in the clique having the same value as that node, we find that the probability that each of the cliques have nodes that all take the same value satisfies the following:
    \begin{equation}
        \PP_G[\text{all nodes same within each clique}] \ge 1 - \frac{2m^2}{m + e^{\mbar\lambda/2}}. \label{eq:pr_same_cliques}
    \end{equation}
    Next, we consider the probabilities of the two cliques taking a common value vs.~two different values.  Letting $A_{\nu,\sigma}$ be the event that the $\nu$-th clique has values all equal to $\sigma \in \{+1,-1\}$, we have from \eqref{eq:Ising} that
    \begin{align}
        \PP_G[A_{1,+} \cap A_{2,+}] = \frac{1}{Z} \exp\bigg( \lambda\bigg( 2{m \choose 2} + m \bigg) \bigg) \\ 
        \PP_G[A_{1,+} \cap A_{2,-}] = \frac{1}{Z} \exp\bigg( \lambda\bigg( 2{m \choose 2} - m \bigg) \bigg).
    \end{align}
    Taking the ratio between the two gives
    \begin{equation}
        \frac{ \PP_G[A_{1,+} \cap A_{2,+}] }{ \PP_G[A_{1,+} \cap A_{2,-}] } = e^{2m\lambda}. \label{eq:clique_ratio}
    \end{equation}
    By the same argument, this is also the ratio between any analogous events with the same signs in the numerator and differing signs in the denominator.  The same argument also applies when we condition on each of the two cliques having common-valued nodes; in this case, the left-hand side of \eqref{eq:clique_ratio} simply amounts to $\frac{\psi}{1-\psi}$, where $\psi$ is the conditional probability that all of the $2m$ nodes making up the two cliques take the same value.  Equating $\frac{\psi}{1-\psi} = e^{2m\lambda}$ in accordance with \eqref{eq:clique_ratio} and solving for $\psi$, we obtain the following:
    \begin{multline}
        \PP_G[\text{all nodes same} \,|\, \text{all nodes same within each clique}] \\ = 1 - \frac{1}{1 + e^{2m\lambda}},
    \end{multline}
    where ``all nodes'' refers to the $2m$ nodes making up the two cliques.  Multiplying this with \eqref{eq:pr_same_cliques} gives
    \begin{equation}
        \PP_G[\text{all nodes same}] \ge 1 - \frac{2m^2}{m + e^{\mbar\lambda/2}} - \frac{1}{1 + e^{2m\lambda}}.
    \end{equation}
    Using this fact along with \eqref{eq:L_EtoP}, we have for all $(i,j)$, even in different cliques, that
    \begin{equation}
        \EE_G[X_i X_j] \ge 1 - \frac{4m^2}{m + e^{\mbar\lambda/2}}  - \frac{2}{1 + e^{2m\lambda}}.
    \end{equation}
    Finally, the number of edges that are in the complete graph $G'$ but not in $G$ is trivially upper bounded by ${2m \choose 2} \le 2m^2$, and thus substitution into \eqref{eq:L_div2} yields 
    \begin{equation}
        D(P_G \| P_{G'}) \le 2\lambda m^2 \bigg( \frac{4m^2}{m + e^{\lambda(m-1)/2}} + \frac{2}{1 + e^{2\lambda m}}  \bigg). 
    \end{equation}
    The proof is concluded by writing
    \begin{align}
        \frac{4m^2}{m + e^{\lambda(m-1)/2}} + \frac{2}{1 + e^{2\lambda m}} 
            &\le \frac{4m^2}{e^{\lambda(m-1)/2}} + \frac{2}{e^{2\lambda m}} \\
            &\le \frac{6m^2}{e^{\lambda(m-1)/2}}.
    \end{align}
\end{proof}

\begin{rem} \label{rem:extension}
    In this ensemble, there are $\alpha m^2$ edges known with certainty, and a possible further $\alpha m(m-1)$ that are unknown.  Thus, slightly more than half of the potential edges are known.  This limits the values of $\qmax$ that are meaningful when applying this ensemble, and is the reason for the constraints on $\qmax$  (e.g., $\qmax \le k/4$) in Theorems \ref{thm:necc_k1}--\ref{thm:necc_d2}.  However, one can generalize this ensemble by considering \emph{more than two} groups of $m$-cliques such that each pair has $m$ inter-clique connections.  With this extension, the fraction of potential edges that are known can be made arbitrarily close to zero, and similar results to those shown in Table \ref{tbl:summary} for $\Gc_k$ (respectively, $\Gc_{k,d}$) can be obtained even when $\qmax = \lfloor \theta \frac{k}{2} \rfloor$ (respectively, $\qmax = \lfloor \theta \frac{k}{2}\frac{d-2}{d} \rfloor$) for some $\theta \in (0,1)$.
\end{rem}

\subsection{Ensemble 4: Many Node-Disjoint Paths}

\noindent This ensemble is based on forming a large number of node-disjoint paths between pairs of nodes, making it difficult to determine whether or not \emph{direct} edges also exist between those nodes \cite{Sha14}.  It is constructed as follows with integer parameters $\eta_1$, $\eta_2$, $m$, $\ell$, $\alpha$:

\medskip
\noindent\fbox{
    \parbox{0.95\columnwidth}{
        \textbf{Ensemble4($\eta_1$,$\eta_2$,$m$,$\ell$,$\alpha$) [Disjoint paths ensemble]:}
        \begin{itemize}
            \item Take an arbitrary subset of the $p$ vertices of size $\eta_1$ and label them $1,2,\dotsc,\eta_1$.  For each consecutive pair of these nodes, including the wrapped-around pair $(\eta_1,1)$, form $\eta_2$ node-disjoint paths of length two between them, and also form $m$ node-disjoint paths of length $\ell$ between them.
            \item Form a base graph $G_0$ by taking $\alpha$ copies of this graph.
            \item Each graph in $\Tc$ is formed by taking $G_0$ and adding arbitrarily many edges among the $\eta_1$ ``center'' nodes of each building block.  Thus, $G_0$ itself has the fewest edges, whereas the graph with $\alpha{\eta_1 \choose 2}$ additional center edges contains the most edges.
        \end{itemize}
    }
} \medskip

An illustration of one building block is shown in Figure \ref{fig:ens4}.
\begin{figure}
    \begin{centering}
        \includegraphics[width=0.95\columnwidth]{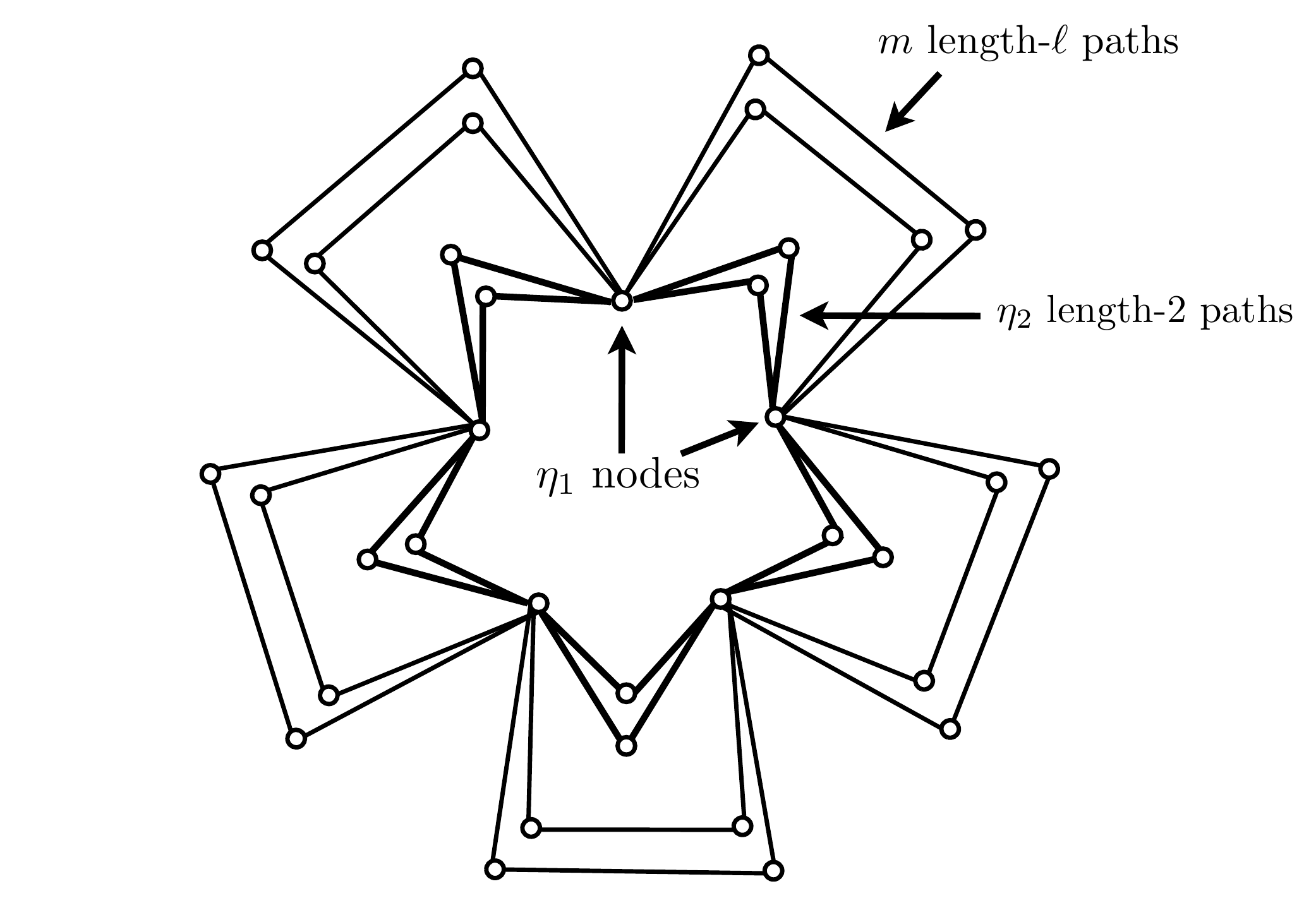}
        \par
    \end{centering}
    
    \caption{Building block for Ensemble 4 with $\eta_1 = 5$, $\eta_2$ = 2, $m = 2$, and $\ell = 3$. \label{fig:ens4}}
\end{figure}

\noindent For this ensemble, have the following:
\begin{itemize}
    \item The number of nodes within each building block is $\eta_1(1 + \eta_2 + m(\ell-1))$, and hence the total number of nodes is $\alpha\eta_1(1 + \eta_2 + m(\ell-1))$.
    \item Within each building block, there are up to $\eta_1 \choose 2$ edges in the center, as well as $2\eta_1\eta_2$ further edges forming paths of length two, and $m\eta_1\ell$ edges forming paths of length $\ell$.   Hence, the total number of edges is between $\alpha\eta_1(2\eta_2+m\ell)$ and $\alpha\eta_1((\eta_1-1)/2 + 2\eta_2+m\ell)$.
    \item The total number of graphs is $|\Tc| = 2^{\alpha{\eta_1 \choose 2}}$.
    \item The maximal degree is less than $\eta_1 + 2\eta_2 + 2m$. 
    \item Similarly to Ensembles 2 and 3, we may set $\Tc' = \Tc$.
    \item The number of graphs within an edit distance $\qmax$ of any given graph is $A(\qmax) = \sum_{q=0}^{\qmax} {{ \alpha{\eta_1 \choose 2} } \choose q } \le 1 + \qmax {{ \alpha{\eta_1 \choose 2} } \choose \qmax}$, assuming $\qmax \le \frac{1}{2}\alpha{\eta_1 \choose 2}$.
    \item Using Lemma \ref{lem:path_ens} below, along with \eqref{eq:L_disjoint}, the KL divergence from any graph in $\Tc$ to the corresponding graph with all centers connected is upper bounded by $\epsilon = \frac{ 2\lambda \alpha \eta_1 {\eta_1 \choose 2} }{ 1 + \big(\cosh(2\lambda)\big)^{\eta_2} \big( \frac{1+(\tanh\lambda)^{\ell}}{1-(\tanh\lambda)^{\ell}} \big)^m }$. 
\end{itemize}
Substituting these into \eqref{eq:Fano_approx} and setting $\qmax = \lfloor \theta_4 \alpha{\eta_1 \choose 2} \rfloor$ for some $\theta_4 \in \big(0,\frac{1}{2}\big)$ gives 
\begin{multline}
    n \ge \frac{ 1 + \big(\cosh(2\lambda)\big)^{\eta_2} \big( \frac{1+(\tanh\lambda)^{\ell}}{1-(\tanh\lambda)^{\ell}} \big)^m }{ 2\lambda \eta_1 } \big( \log 2 - H_2(\theta_4) \big) \\ \times\big(1 - \delta - o(1)\big) \label{eq:ens4}
\end{multline}
provided that $\alpha{\eta_1 \choose 2} \to \infty$.

It remains to prove the claim on the KL divergence, formalized as follows.

\begin{lem} \label{lem:path_ens}
    Let $G$ denote the graph corresponding to a single group in the construction in Ensemble 4, and let $G'$ be the corresponding building block with all of the center nodes connected.  Then
    \begin{equation}
        D(P_G \| P_{G'}) \le \frac{ 2\lambda \eta_1 {\eta_1 \choose 2} }{ 1 + \big(\cosh(2\lambda)\big)^{\eta_2} \big( \frac{1+(\tanh\lambda)^{\ell}}{1-(\tanh\lambda)^{\ell}} \big)^m }. \label{eq:path_div}
    \end{equation}
\end{lem}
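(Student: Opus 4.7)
The plan is to combine the divergence bound (eq:L_div2), the path correlation bound (eq:L_paths), and a Hamming-type triangle inequality for the correlations at the center nodes. Since $G$ is obtained from the base graph by adding some subset of the center edges and $G'$ is obtained by adding \emph{all} of them, we have $E \subseteq E'$, so that $E \backslash E' = \emptyset$ and $E' \backslash E$ consists of at most $\binom{\eta_1}{2}$ edges, all of which lie between center nodes. The bound (eq:L_div2) therefore reduces to
\begin{equation}
    D(P_G \| P_{G'}) \le \lambda \sum_{(i,j)\in E'\setminus E} \big( 1 - \EE_G[X_i X_j] \big),
\end{equation}
so the task is to upper bound $1-\EE_G[X_i X_j]$ uniformly over pairs $(v_i,v_j)$ of center nodes.

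Next, for any \emph{consecutive} pair $(v_t,v_{t+1})$ on the ring, the construction guarantees $\eta_2$ node-disjoint paths of length $2$ and $m$ node-disjoint paths of length $\ell$ between them (none of which uses any center-to-center edge). Applying (eq:L_paths) with $L=2$, $(\ell_1,m_1)=(2,\eta_2)$, $(\ell_2,m_2)=(\ell,m)$, together with the identity $\frac{1+\tanh^2\lambda}{1-\tanh^2\lambda}=\cosh(2\lambda)$, yields
\begin{equation}
    1-\EE_G[X_{v_t} X_{v_{t+1}}] \le \frac{2}{1+A}, \qquad A := \big(\cosh(2\lambda)\big)^{\eta_2}\Big(\tfrac{1+(\tanh\lambda)^{\ell}}{1-(\tanh\lambda)^{\ell}}\Big)^{m}.
\end{equation}
(The degenerate case $\ell=2$ is recovered by treating the two path types as $\eta_2+m$ disjoint length-$2$ paths, giving the same expression.)

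To extend this to a non-consecutive pair $(v_i,v_j)$, I would use the elementary Hamming-type triangle inequality $\PP_G[X_{v_i}\ne X_{v_j}] \le \sum_{t}\PP_G[X_{v_t}\ne X_{v_{t+1}}]$, where the sum is over the consecutive hops on (say) the shorter arc of the ring, containing at most $\eta_1$ hops. Converting each probability to an expectation via (eq:L_EtoP), this becomes $1-\EE_G[X_{v_i}X_{v_j}] \le \sum_t \big(1-\EE_G[X_{v_t}X_{v_{t+1}}]\big) \le \eta_1\cdot \frac{2}{1+A} = \frac{2\eta_1}{1+A}$, uniformly in $(i,j)$. Summing the resulting bound over the at most $\binom{\eta_1}{2}$ pairs in $E'\setminus E$ and multiplying by $\lambda$ gives the claim.

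The one step that requires a moment of thought is the second one: the graph $G$ does contain many extra center edges, so a priori the paths used in (eq:L_paths) might not be truly node-disjoint within $G$. This is handled by first restricting to a subgraph $G''\subseteq G$ consisting only of the $\eta_2$ length-$2$ and $m$ length-$\ell$ paths between $(v_t,v_{t+1})$ (which are genuinely node-disjoint by construction) and applying the monotonicity inequality (eq:L_subgraph), $\EE_G[X_{v_t}X_{v_{t+1}}]\ge \EE_{G''}[X_{v_t}X_{v_{t+1}}]$, before invoking (eq:L_paths) on $G''$. All the remaining manipulations are routine.
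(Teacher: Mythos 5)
Your proposal is correct and follows essentially the same route as the paper's proof: reduce via \eqref{eq:L_div2} to bounding $1-\EE_G[X_iX_j]$ over center pairs, apply \eqref{eq:L_paths} (after restricting to the node-disjoint-path subgraph via \eqref{eq:L_subgraph}) to consecutive pairs, and propagate to arbitrary center pairs with a union bound carrying the factor $\eta_1$. The only cosmetic difference is that the paper phrases the propagation step as a union bound on the event that all $\eta_1$ center nodes agree, whereas you chain the pairwise disagreement probabilities along the arc joining $i$ and $j$; both yield the same factor of $\eta_1$.
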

\begin{proof}
    We focus on the case that $G$ is the building block described above; the case that further edges are present is handled similarly using \eqref{eq:L_subgraph}.

    We know from \eqref{eq:L_paths} that the joint distribution between any two consecutive nodes in the center satisfies
    \begin{equation}
        \EE_G[X_i X_j] \ge 1 - \frac{2}{1 + \big(\cosh(2\lambda)\big)^{\eta_2} \big( \frac{1+(\tanh\lambda)^{\ell}}{1-(\tanh\lambda)^{\ell}} \big)^m},
    \end{equation}
    since $\frac{1 + (\tanh\lambda)^2}{1 - (\tanh\lambda)^2} = \cosh(2\lambda)$.  Using \eqref{eq:L_EtoP}, this implies
    \begin{equation}
        \PP_G[X_i = X_j] \ge 1 - \frac{1}{1 + \big(\cosh(2\lambda)\big)^{\eta_2} \big( \frac{1+(\tanh\lambda)^{\ell}}{1-(\tanh\lambda)^{\ell}} \big)^m}.
    \end{equation}
    Thus, by applying the union bound over $(i,j)$ pairs of the form $(1,2),(2,3),\dotsc,(\eta_1-1,\eta_1),(\eta_1,1)$, the probability that all $\eta_1$ of the center nodes take the same value satisfies
    \begin{multline}
        \PP_G[\text{all center nodes same}] \\ \ge 1 - \frac{\eta_1}{1 + \big(\cosh(2\lambda)\big)^{\eta_2} \big( \frac{1+(\tanh\lambda)^{\ell}}{1-(\tanh\lambda)^{\ell}} \big)^m}.
    \end{multline}
    Again using \eqref{eq:L_EtoP}, this implies for any pair of center nodes $(i,j)$, including non-adjacent pairs, that
    \begin{equation}
        \EE_G[X_i X_j] \ge 1 - \frac{2\eta_1}{1 + \big(\cosh(2\lambda)\big)^{\eta_2} \big( \frac{1+(\tanh\lambda)^{\ell}}{1-(\tanh\lambda)^{\ell}} \big)^m}.
    \end{equation}
    Observing that the corresponding edge sets $E$ and $E'$ satisfy $|E'\backslash E| \le {\eta_1 \choose 2}$  and $|E\backslash E'| = 0$, \eqref{eq:path_div} follows from \eqref{eq:L_div2}.
\end{proof}

\section{Applications to Graph Families} \label{sec:APPLICATIONS}

Finally, we prove our main results by applying the ensembles from the previous section to the graph families introduced in Section \ref{sec:SETUP}.  All of the necessary conditions on $n$ stated in this section are those needed to obtain $\pe(\qmax) \le \delta$, where the graph class defining $\pe(\cdot)$ will be clear from the context.

\subsection{Proofs of Theorems \ref{thm:necc_k1}--\ref{thm:necc_k2}: Bounded Edges Ensemble}

For the class $\Gc_{k}$ of graphs with at most $k$ edges, we have the following:
\begin{itemize}
     \item If $k \le p/4$, then using Ensemble 1 with $\alpha = k$, we obtain from \eqref{eq:ens1} that
    \begin{equation}
        n \ge \frac{ 2(1-\theta_1)\log p }{ \lambda\tanh\lambda }\Big( 1-\delta - o(1) \Big) \label{eq:k_cond_1}
    \end{equation}
    provided that $\qmax \le \lfloor \theta_1 k \rfloor$ for some $\theta_1 \in (0,1)$.
    \item If $k = \lfloor cp^{1+\nu} \rfloor$ for some $c > 0$ and $\nu \in [0,1)$, then we use Ensemble 2 with $m = \lfloor 2cp^{\nu} \rfloor$ and $\alpha = \lfloor p/m \rfloor = \frac{1}{2c} p^{1-\nu} (1+o(1))$, chosen so that $m\alpha \le p$ nodes are used in the construction.  The number of possible edges is $\alpha{m \choose 2} \le \frac{1}{2}\alpha m^2 \le \frac{1}{2}pm \le c p^{1+\nu}$, as desired.  We obtain from \eqref{eq:ens2} that
    \begin{equation}
        n \ge \frac{ \log 2 - H_2(\theta_2) }{ \lambda \frac{e^{2\lambda}\cosh(2\lambda cp^{\nu}) - 1}{e^{2\lambda}\cosh(2\lambda cp^{\nu}) + 1} } \big(1 - \delta - o(1)\big) \label{eq:k_cond_1a}
    \end{equation}
    provided that $\qmax \le \lfloor \theta_2 \alpha {m \choose 2}\rfloor$ for some $\theta_2 \in \big(0,\frac{1}{2}\big)$.  Substituting the choices of $m$ and $\alpha$ into the latter expression, we find that $\qmax$ can be as large as $\theta_2 k (1+o(1))$.
    \item We use Ensemble 3 with $\alpha=1$ and $m = \lfloor \sqrt{k/2} \rfloor$, chosen so that the number of edges does not exceed $2\alpha m^2 \le k$. With these choices, we obtain from \eqref{eq:ens3}, along with the identity $\lfloor m \rfloor \ge m - 1$, that
    \begin{equation}
        n \ge \frac{ e^{\lambda(\sqrt{k/2}-2)/2} \big( \log 2  - H_2(\theta_3) \big) }{ 6\lambda k } \big(1 - \delta - o(1)\big), \label{eq:k_cond_2}
    \end{equation} 
     provided that $\qmax \le \lfloor\theta_3 \alpha m(m-1)\rfloor$ for some $\theta_3 \in \big(0,\frac{1}{2}\big)$.  Substituting the choices of $m$ and $\alpha$ into the latter expression, we find that $\qmax$ can be as large as $\theta_3 \frac{k}{2} (1+o(1))$, provided that $k \to \infty$.  Note that this construction uses $2m\alpha \le \sqrt{2k}$ nodes, which is asymptotically less than $p$ since $k = o(p^2)$.
\end{itemize}
We obtain Theorem \ref{thm:necc_k1} from \eqref{eq:k_cond_1} and \eqref{eq:k_cond_2}, and Theorem \ref{thm:necc_k2} from \eqref{eq:k_cond_1a} and \eqref{eq:k_cond_2}.  Specifically, we set $\qmax = \lfloor \theta k \rfloor$ for some $\theta \in \big(0,\frac{1}{4}\big)$, and by equating this with the above upper bounds on $\qmax$ we see that we may set $\theta_1 = \theta$, $\theta_2 = \theta(1+o(1))$ and $\theta_3 = 2\theta(1+o(1))$.

\subsection{Proofs of Theorems \ref{thm:necc_d1}--\ref{thm:necc_d2}: Bounded Degree Ensemble}

For the class $\Gc_{k,d}$ of graphs such that every node has degree at most $d$, and the total number of edges does not exceed $k$, we have the following:
\begin{itemize}
     \item If $k \le p/4$, then using Ensemble 1 with $\alpha = k$, we obtain from \eqref{eq:ens1} that
    \begin{equation}
        n \ge \frac{ 2(1-\theta_1)\log p }{ \lambda\tanh\lambda }\Big( 1-\delta - o(1) \Big), \label{eq:d_cond_1}
    \end{equation}
    provided that $\qmax \le \lfloor \theta_1 k \rfloor$ for some $\theta_1 \in (0,1)$.
    \item In the case that $k = \Omega(p)$, we use Ensemble 2 with the following parameters:
    \begin{enumerate}
        \item $m = d' \le d$, chosen so that the maximal degree $m-1$ does not exceed $d$;
        \item $\alpha = \lfloor k/{d' \choose 2} \rfloor$, chosen so that the number of edges $\alpha{m \choose 2}$ does not exceed $k$.
    \end{enumerate} 
    With these choices, we obtain from \eqref{eq:ens2} that
    \begin{equation}
        n \ge \frac{ \log 2 - H_2(\theta_2) }{ \lambda \frac{e^{2\lambda}\cosh(2\lambda d') - 1}{e^{2\lambda}\cosh(2\lambda d') + 1} } \big(1 - \delta - o(1)\big), \label{eq:d_cond_1a}
    \end{equation}
    whenever $\qmax \le \lfloor \theta_2 \alpha {d' \choose 2} \rfloor$ for some $\theta_2 \in \big(0,\frac{1}{2}\big)$. Substituting the choice of $\alpha$, we find that $\qmax$ can be a large as $\theta_2 k (1+o(1))$.  Note also that the number of nodes used is upper bounded as $\alpha m \le \frac{k}{{d' \choose 2}} d' = \frac{2k}{d'-1}$, which is upper bounded by $p$ provided that $k \le \frac{1}{2}p(d'-1)$.
    \item We use Ensemble 3 with the following parameters:
    \begin{enumerate}
        \item $m = \lceil d/2 \rceil$, chosen so that each block has nodes with degree not exceeding $2m-1 \le d$; 
        \item $\alpha = \big\lfloor \frac{k}{{2m \choose 2}} \big\rfloor$, chosen to ensure that the number of edges does not exceed $\alpha{2m \choose 2} \le k$.
    \end{enumerate}
    With these choices, we obtain from \eqref{eq:ens3} that
    \begin{equation}
        n \ge \frac{ e^{\lambda (d-2)/4} \big( \log 2 - H_2(\theta_3) \big) }{ 3\lambda d^2 } \big(1 - \delta - o(1)\big), \label{eq:d_cond_2}
    \end{equation}
    when $\qmax \le \lfloor\theta_3 \alpha m(m-1) \rfloor$ for some $\theta_3 \in \big(0,\frac{1}{2}\big)$. Substituting the choice of $\alpha$ to obtain $\alpha m(m-1) = k\frac{m-1}{2m-1} (1+o(1))$, and then writing $d/2  \le m \le (d+1)/2$, we find that the latter condition holds provided that $\qmax \le \frac{d/2 - 1}{d} \theta_3 k (1+o(1))$.  The number of nodes used is $2m\alpha \le \frac{2mk}{m(2m-1)} = \frac{2k}{2m-1} \le \frac{2k}{d-1}$, which is upper bounded by $p$ provided that $k \le \frac{1}{2}p(d-1)$.
\end{itemize}

We obtain Theorem \ref{thm:necc_d1} from \eqref{eq:d_cond_1} and \eqref{eq:d_cond_2}, and Theorem \ref{thm:necc_d2} from \eqref{eq:d_cond_1a} and \eqref{eq:d_cond_2}.  Similarly to the previous subsection, we set $\theta_1 = \theta$, $\theta_2 = \theta(1+o(1))$, and $\theta_3 = \frac{d}{d - 2}\cdot 2\theta(1+o(1))$.

\subsection{Proofs of Theorem \ref{thm:necc_s1}: Sparse Separator Ensemble}

For the class $\Gc_{k,d,\eta,\gamma}$ (\emph{cf.}~Section \ref{sec:SPARSE_ENS}), we have the following:
\begin{itemize}
    \item If $k \le p/4$, then again using Ensemble 1 with $\alpha = k$, we obtain from \eqref{eq:ens1_init2} that
        \begin{equation}
            n \ge \frac{ 2(k - \qmax) \log {p} }{ k\lambda\tanh\lambda }\big( 1-\delta - o(1) \big). \label{eq:sep_cond_1}
        \end{equation}
    \item We use Ensemble 4 with the following parameters:
    \begin{enumerate}
        \item $\eta_1 = \lfloor c \eta \rfloor$ and $\eta_2 = \lfloor (1- c)\eta \rfloor$ for some $c \in \big(\frac{1}{\eta},1\big]$, thus ensuring that $\eta_1 \ge 1$; 
        \item $\ell = \gamma+1$, chosen to ensure that the $(\eta,\gamma)$-separation condition is satisfied; 
        \item $m \le d/2 - \eta$, chosen so that the maximal degree is upper bounded by $\eta_1 + 2\eta_2 + 2m \le 2\eta + 2m \le d$;
        \item $\alpha = \lfloor \frac{k}{ c \eta( c\eta/2 + 2(1-c)\eta + m(\gamma + 1) } \rfloor$, chosen to ensure the total number of edges $\alpha\eta_1((\eta_1-1)/2 + 2\eta_2+m\ell)$ does not exceed $k$.
    \end{enumerate} 
    With these choices, we obtain from \eqref{eq:ens4} that
    \begin{multline}
        n \ge \frac{ 1 + \big(\cosh(2\lambda)\big)^{(1-c)\eta - 1} \big( \frac{1+(\tanh\lambda)^{\gamma+1}}{1-(\tanh\lambda)^{\gamma+1}} \big)^{m} }{ 2\lambda c \eta } \\ \times \big( \log 2 - H_2(\theta_4) \big)\big( 1-\delta - o(1) \big) \label{eq:sep_cond_2}
    \end{multline}
    provided that $\qmax \le \lfloor \theta_4 \alpha (c\eta - 1)^2/2 \rfloor$ for some $\theta_4\in\big(0,\frac{1}{2}\big)$.  Here we have used $\zeta - 1 \le \lfloor \zeta \rfloor \le \zeta$ and ${c\eta \choose 2} \ge (c\eta-1)^2/2$.  Note that the graph in this ensemble with the most edges has at least as many edges as nodes, since each node is connected to at least two edges.  Thus, since we have assumed $k \le p/4$ and we have already chosen the parameters to ensure there are at most $k$ edges, we have also ensured that less than $p$ nodes are used. Substituting the above choice of $\alpha$ into the upper bound on $\qmax$, we find that $\qmax$ can be as large as
    \begin{multline}
        \Big\lfloor \theta_4 \frac{(c\eta - 1)^2 k}{ 2c\eta( c\eta/2 + 2(1-c)\eta + m(\gamma + 1) } \Big\rfloor \\
                    \ge \Big\lfloor \theta_4 \frac{(c\eta - 1)^2 k}{ 2c\eta( 2\eta + m(\gamma + 1)) }\Big\rfloor, \label{eq:sep_qmax_bound}
    \end{multline}
    since $c\eta/2 + 2(1-c)\eta \le 2\eta$ for $c \in [0,1]$.  
\end{itemize}

We obtain Theorem \ref{thm:necc_s1} by combining \eqref{eq:sep_cond_1}, \eqref{eq:sep_cond_2} and \eqref{eq:sep_qmax_bound}, and renaming $\theta_4$ as $\theta$.

\section{Numerical Results} \label{sec:NUMERICAL}

In this section, we simulate the graph learning problem for some of the ensembles presented in Section \ref{sec:ENSEMBLES}, as well as the analogous ensembles used for exact recovery in \cite{San12,Sha14}.  Before proceeding, we discuss the optimal decoding techniques for the two recovery criteria.

Suppose that the graph $G$ is uniformly drawn from some class $\Gc$.  In the case of exact recovery, the optimal decoder is the maximum-likelihood (ML) rule
\begin{equation}
    \hat{G} = \argmax_{G \in \Gc} \PP_{G}[\Xv], \label{eq:ML}
\end{equation}
where $\PP_{G}[\Xv]$ is the probability of observing the samples $\Xv \in \{0,1\}^{n \times p}$ when the true graph is $G$.  In contrast, the optimal rule for approximate recovery is given by
\begin{equation}
    \hat{G} = \argmax_{G \in \Gc} \sum_{G' \,:\, |E\Delta E'| \le \qmax}  \PP_{G'}[\Xv], \label{eq:ML_partial}
\end{equation}
where $E$ and $E'$ are the edge sets of $G$ and $G'$ respectively.  Both \eqref{eq:ML} and \eqref{eq:ML_partial} are, in general, computationally intractable, requiring a search over the entire space $\Gc$.  However, in the examples below, we are able to apply \eqref{eq:ML} by using various tricks such as symmetry arguments.  While we need to consider relatively small graph sizes for Ensembles 3 and 4, these will still be adequate for generating results that support the theory.

Unfortunately, we found the implementation of \eqref{eq:ML_partial} much more difficult, and we therefore also use \eqref{eq:ML} for approximate recovery even though, in general, it is only optimal for exact recovery. Nevertheless, even with approximate recovery, we expect ML to provide a benchmark that that is unlikely to be beaten by any practical methods.

In all of the experiments, the error probabilities are obtained by evaluating the empirical average over $5000$ trials.

\subsection{A Variant of Ensemble 1 and a Counterpart from \cite{San12}}

It was shown in \cite{San12} that if one considers all graphs with a single edge, then it is difficult to distinguish each of these from the empty graph if $\lambda$ is small, thus making exact recovery difficult.  In Figure \ref{fig:numerical1}, we simulate the performance of this ensemble with $p=100$.  Since the partition function $Z$ (see \eqref{eq:Ising}) is the same for all graphs in this ensemble, the ML rule \eqref{eq:ML} simply amounts to declaring the single edge to be the pair $(i,j)$ among the ${p \choose 2}$ possibilities such that $X_i = X_j$ in the highest number of samples.

Our Ensemble 1 is analogous to the single-edge ensemble from \cite{San12}; however, in order to facilitate the computation, we consider a slight variant defined as follows:

\medskip
\noindent\fbox{
    \parbox{0.95\columnwidth}{
        \textbf{Ensemble1a($\alpha$) [Isolated edges ensemble]:}
        \begin{itemize}
            \item Group the $p$ vertices into $p/2$ fixed pairs in an arbitrary manner.
            \item Each graph in $\Tc$ is obtained by connecting exactly $\alpha$ of those $p/2$ pairs.
        \end{itemize}
    }
} \medskip

Note that Ensemble 1a can be interpreted as a genie-aided version of Ensemble 1, where the decoder is given information narrowing the $\prod_{i=0}^{\alpha} {p - 2i \choose 2}$ possible graphs down to a smaller set of size ${p/2 \choose \alpha}$.  For this reason, the performance under Ensemble 1a is an optimistic estimate of the performance under Ensemble 1, and moving to the latter should only narrow the gaps seen in our comparisons to \cite{San12}.

Figure \ref{fig:numerical1} plots the approximate recovery error probability for Ensemble 1a with $p=100$ and $\alpha = 12$, setting $\qmax = 3$ so that up to a quarter of the edges may be in error.  The maximum-likelihood rule \eqref{eq:ML} is simple to implement: Since all graphs have the same partition function, the most likely graph corresponds to choosing the $\alpha$ edges among the $p/2$ potential edges, such that the corresponding pairs of nodes agree in as many observations as possible.  This can  be implemented by simply counting the number agreements of the $p/2$ pairs and then sorting.

In accordance with our theory, the general behavior of the error probability as a function of $n$ is similar for Ensemble 1a (approximate recovery) and the ensemble from \cite{San12} (exact recovery).  Moving to approximate recovery does provide some benefit, but it appears to be only in the constant factors.  More specifically, across the range shown, the number of measurements required to achieve a given error probability in $[0.01, 0.5]$ differs for the two ensembles and recovery criteria only by a multiplicative factor in the range $[1,2.2]$. In both cases, the learning problem becomes increasingly difficult as $\lambda$ becomes smaller, since the edges are weaker and therefore more difficult to detect.

\begin{figure}
    \begin{centering}
        \includegraphics[width=0.95\columnwidth]{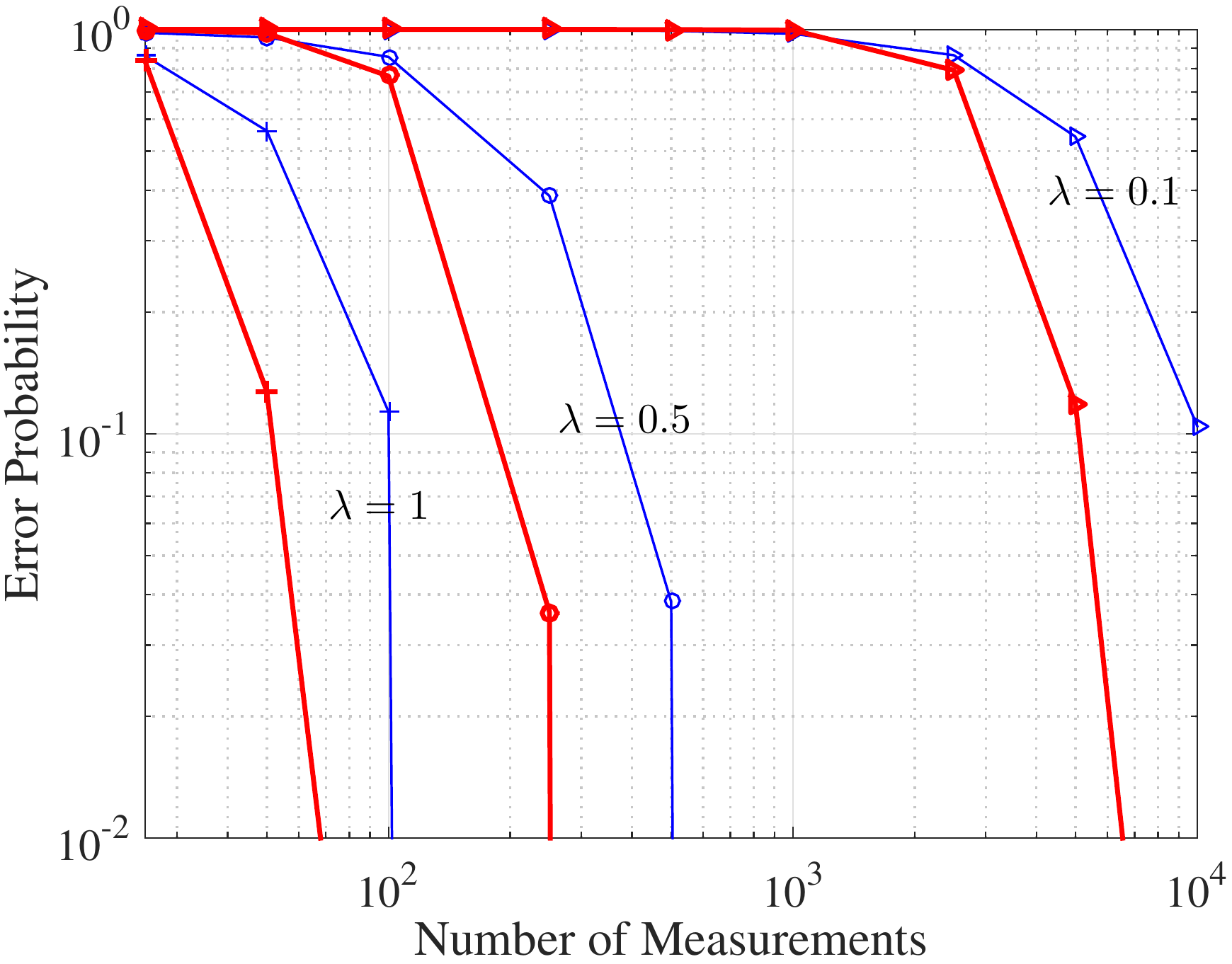}
        \par
    \end{centering}
    
    \caption{Empirical performance for Ensemble 1a (approximate recovery; red bold) and its counterpart from \cite{San12} (exact recovery; blue non-bold). \label{fig:numerical1}} \vspace*{-3ex}
\end{figure}

\subsection{Ensemble 3 and a Counterpart from \cite{San12}} 

A counterpart to Ensemble 3 from \cite{San12} considers the ${m' \choose 2}$ possible graphs on $m'$ nodes obtained by removing a single edge from the $m'$-clique.  Thus, every graph is difficult to distinguish from the $m'$-clique, particularly as $m'$ and $\lambda$ increase, and exact recovery is difficult.  In Figure \ref{fig:numerical3}, we plot the performance of this ensemble with $m' = 8$.  In this case, ML decoding amounts to choosing the pair $(i,j)$ such that $X_i \ne X_j$ in the highest number of samples.

For comparison, we consider Ensemble 3 with $m=4$ and $\alpha = 1$, chosen so that the maximal number of edges and degree match those of the ensemble from \cite{San12} with $m'=8$.  We set $\qmax = 3$, so that up to a quarter of the $12$ \emph{unknown} edges may be in error.  We perform ML decoding using a brute force search over the $2^{12}$ possible graphs.

Compared to the previous example, the gap between the curves for approximate recovery and exact recovery are more significant.  This is because although both our results and those of \cite{San12} prove that the sample complexity is exponential in $\lambda m$, the exponent in \cite{San12} is double that of ours.  Intuitively, this is because we work with cliques of half the size.  Despite this, the general behavior of our curves and those of \cite{San12} is similar, with the sample complexity rapidly growing large as $\lambda$ increases due to higher correlations among the $8$ nodes.

\begin{figure}
    \begin{centering}
        \includegraphics[width=0.95\columnwidth]{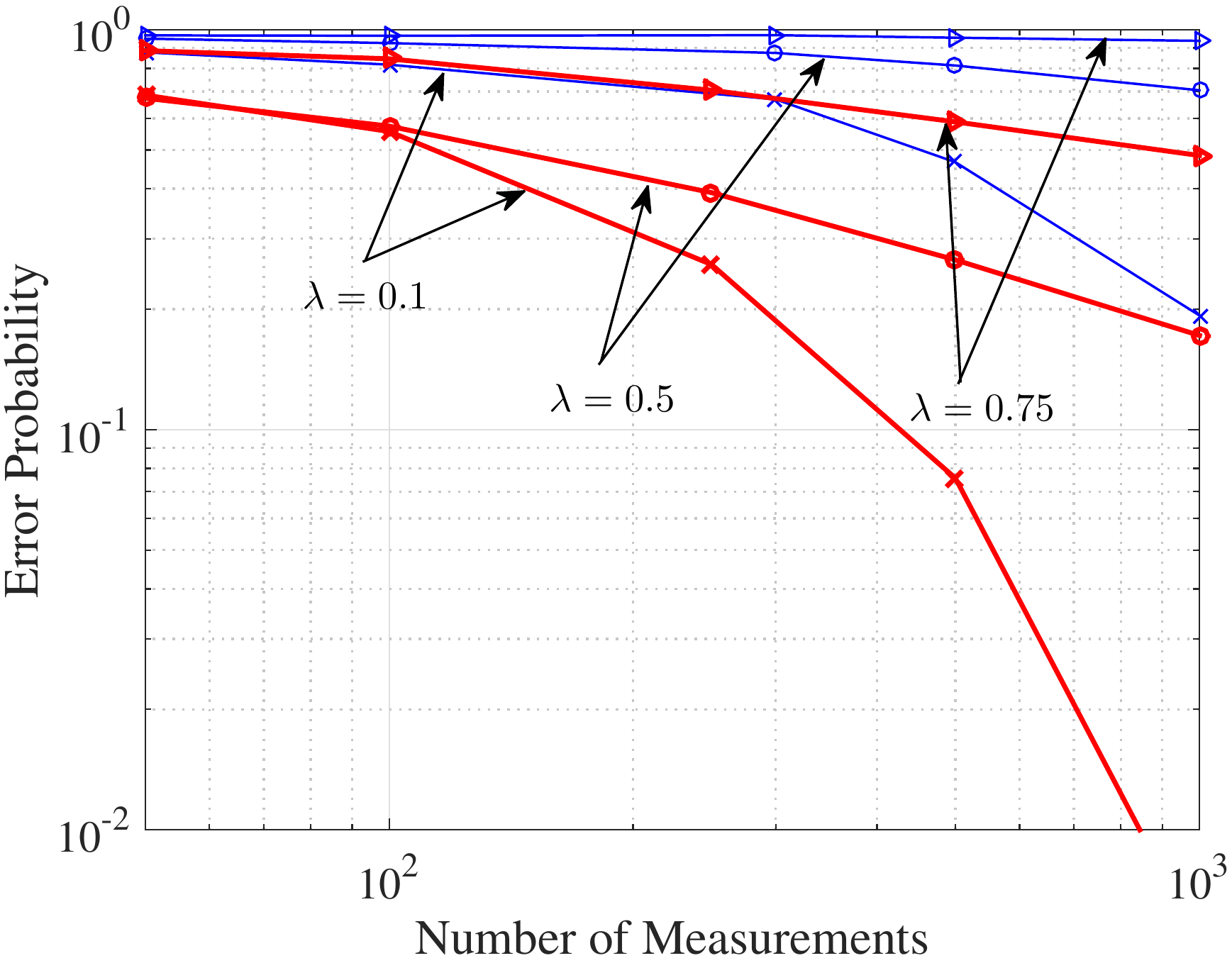}
        \par
    \end{centering}
    
    \caption{Empirical performance for Ensemble 3 (approximate recovery; red bold) and its counterpart from \cite{San12} (exact recovery; blue non-bold).  \label{fig:numerical3}} \vspace*{-3ex}
\end{figure}

\subsection{Ensemble 4 and a Counterpart from \cite{Sha14}}

A counterpart to Ensemble 4 from \cite{Sha14} first constructs $\alpha$ disjoint building blocks, each of which connects two nodes $(i,j)$, and then forms $\eta$ node-disjoint paths of length $2$ between them.  Each graph in the ensemble is then obtained by removing the direct edge from one of the $\alpha$ building blocks, while leaving the length-$2$ paths unchanged.  We consider this construction with $\alpha = 4$ and $\eta = 8$, thus leading to the use of $p = 40$ nodes and $k = 68$ edges, and a maximal degree $d = 9$.  Figure \ref{fig:numerical4} plots the performance of the ML decoder, which amounts to counting the number agreements between the $\alpha$ pairs of ``central'' nodes (one per building block), and declaring the edge to be absent in the one with the most disagreements.

For comparison, we consider Ensemble 4 with $\eta_1=4$, $\eta_2 = 3$, $m = 0$ and $\alpha = 2$; this construction uses $p = 32$ nodes and $k = 60$ edges, and has a maximal degree $d = 9$, thus being comparable to the above construction from \cite{Sha14}.  We set $\qmax = 3$, so that up to a quarter of the $12$ \emph{unknown} edges may be in error.  We perform ML decoding using a brute force search over the $2^{12}$ possible graphs, which simplifies to performing ML separately on the $2^6$ possible graphs corresponding to each of the two building blocks.

Once again, we observe the same general behavior between our ensemble and that of \cite{Sha14}.  While it may appear unusual that the exact recovery curves have a smaller error probability at low values of $n$, this occurs because even a random guess achieves an probability of exact recovery of $\frac{1}{4}$ for the ensemble in \cite{Sha14} with $\alpha = 4$. Despite this, we see that approximate recovery is easier for large $n$ as expected, and that in both cases the recovery problem rapidly becomes more difficult as $\lambda$ increases due to higher correlations among the nodes.

\begin{figure}
    \begin{centering}
        \includegraphics[width=0.95\columnwidth]{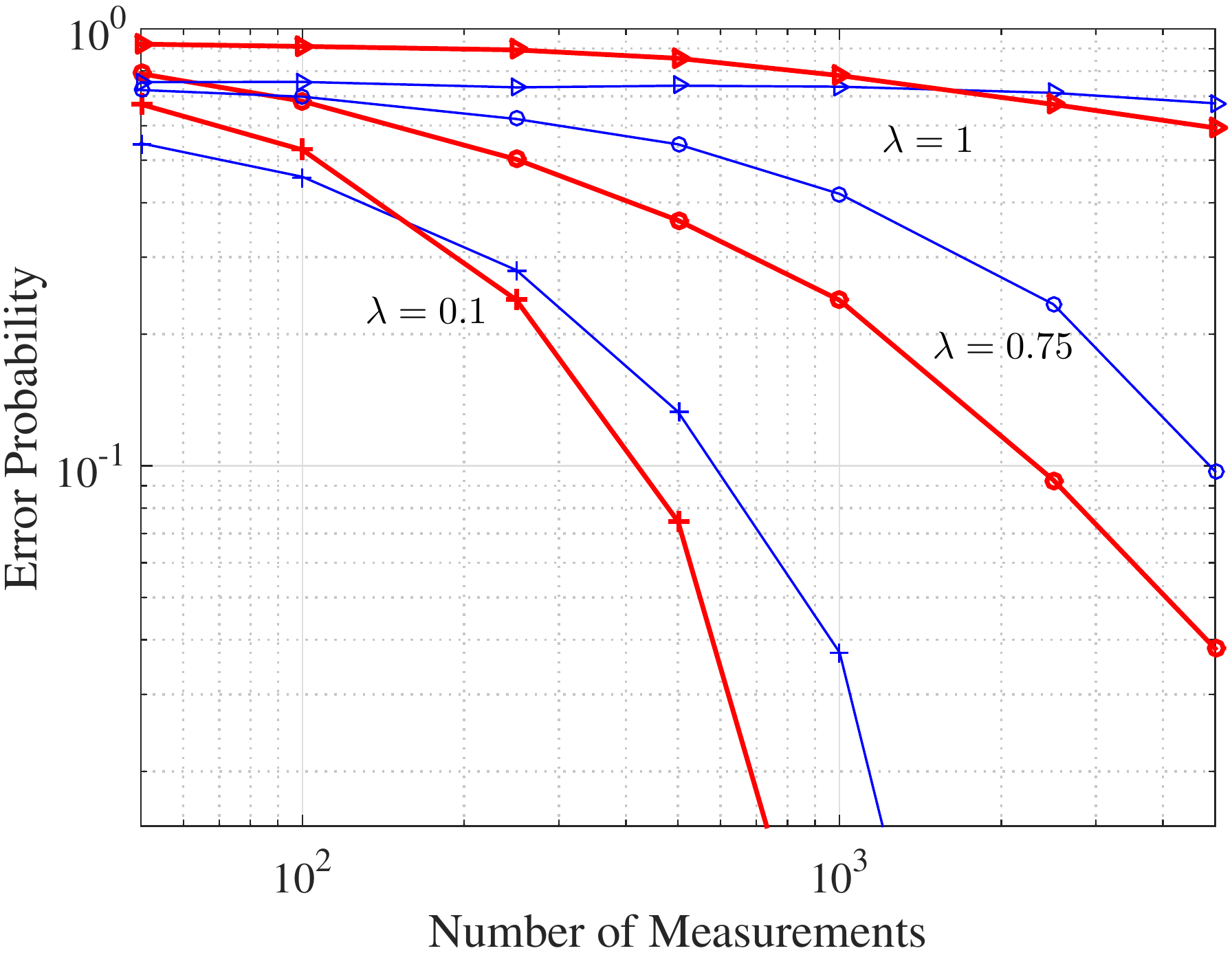}
        \par
    \end{centering}
    
    \caption{Empirical performance for Ensemble 4 (approximate recovery; red bold) and its counterpart from \cite{Sha14} (exact recovery; blue non-bold). \label{fig:numerical4}} \vspace*{-3ex}
\end{figure}

\section{Conclusion}

We have provided information-theoretic lower bounds on Ising model selection with approximate recovery for a variety of graph classes.  For a wide range of scaling regimes of the relevant parameters, we have obtained necessary conditions with the same scaling laws as the best known conditions for exact recovery, thus indicating that approximate recovery is not much easier in the minimax sense.  

To this end, we presented a generalized form of Fano's inequality for handling approximate recovery, and applied it to a variety of graph ensembles.  These were broadly categorized into those where it is difficult to distinguish each graph from the empty graph, and those where it is difficult to determine which edges between highly-correlated groups of nodes are present.  In both cases, we required a departure from the ensembles considered for exact recovery \cite{San12,Sha14} in which the graphs differ in only one or two edges.

It would be interesting to determine to what extent approximate recovery can help when we move beyond the minimax performance criterion and the edit distance.  For example, significant gains may be possible in the setting of random Ising model edge weights $\{\lambda_{ij}\}$, since it may become safe to ``ignore'' the weakest edges.  As another example, since our analysis is based on constructing ensembles of graphs having a KL divergence that is close to a single graph, one may expect that under a recovery criterion based on $D(P_G \| P_{\hat{G}})$ being small, there is more to be gained.  Other directions for further work include models beyond the Ising model (e.g., non-binary, Gaussian), and studies of achieving approximate recovery with \emph{practical} algorithms.

\appendices

\section{Proof of Lemma \ref{lem:Fano_approx}} \label{sec:PF_FANO}

The proof follows standard steps in the derivation of Fano's inequality as in \cite{Sha14}, but with suitable modifications to handle the approximate recovery criterion; see \cite{Ree13} for analogous modifications in the context of support recovery, and \cite{Vat11} for a related list decoding result.  Due to the similarities to other variants, we focus primarily on the details that are specific to the approximate recovery criterion.

Let $G$ be uniformly distributed on $\Tc$, let $\hat{G}$ be the estimate of $G$, and let $E$ and $\hat{E}$ be the corresponding edge sets.  Moreover, let $\pebar(\qmax)$ be the error probability $\PP[|E \Delta E'| > \qmax]$ averaged over the random graph $G$.

By assumption, we may consider decoders such that $\hat{G} \in \Tc'$ without loss of optimality.  Defining the error indicator $\Ec := \openone\{ |E \Delta E'| > \qmax \}$ and applying the chain rule for entropy in two different ways, we have
\begin{align}
    H(\Ec,G|\hat{G}) 
        &= H(G|\hat{G}) + H(\Ec|G,\hat{G}) \label{eq:fano_s1} \\
        &= H(\Ec|\hat{G}) + H(G|\Ec,\hat{G}). \label{eq:fano_s2}
\end{align}
We have $H(\Ec|G,\hat{G}) = 0$ since $\Ec$ is a function of $(G,\hat{G})$, and $H(\Ec|\hat{G}) \le \log 2$ since $\Ec$ is binary.  Moreover, we have
\begin{align}
    & H(G|\Ec,\hat{G}) \nonumber \\
        &\quad = (1-\pebar(\qmax))H(G|\Ec=0,\hat{G}) \nonumber \\
            & \qquad\qquad\qquad\qquad\qquad + \pebar(\qmax) H(G|\Ec=1,\hat{G}) \\
        &\quad \le (1-\pebar(\qmax)) \log A(\qmax) + \pebar(\qmax) \log |\Tc|, \label{eq:fano_s4}
\end{align}
where \eqref{eq:fano_s4} follows from the definition of $A(\dmax)$ in the lemma statement and the fact that $\Ec=0$ implies that $G$ is within a distance $\qmax$ of $\hat{G}$, and we have used the fact that the entropy is upper bounded by the logarithm of the number of elements of the support.  We have now handled three of the terms in \eqref{eq:fano_s1}--\eqref{eq:fano_s2}, and for the final one we write $H(G|\hat{G}) = -I(G;\hat{G}) + H(G) = -I(G;\hat{G}) + \log|\Tc|$, since $G$ is uniform on $\Tc$.

Substituting the preceding observations into \eqref{eq:fano_s1}--\eqref{eq:fano_s2} and performing some simple rearrangements gives
\begin{equation}
    \pebar(\qmax) \ge 1 - \frac{I(G;\hat{G}) + \log 2}{\log|\Tc| - \log A(\qmax)}. \label{eq:fano_s5}
\end{equation}
Finally, we bound the mutual information using the steps of \cite{Sha14}, which are stated here without the details in order to avoid repetition:  We use the data processing inequality to write $I(G;\hat{G}) \le I(G;\Xv)$, where $\Xv$ contains the $n$ independent samples from $P_G$. Using a covering argument, as well as the assumption containing $G'$ in the lemma statement, it follows that $I(G;\Xv) \le n\epsilon$.  Substituting into \eqref{eq:fano_s5}, solving for $n$, and writing $\pe(\qmax) \ge \pebar(\qmax)$, we obtain the desired result.

\bibliographystyle{IEEEtran}
\bibliography{../JS_References}

\begin{thebibliography}{10}
\providecommand{\url}[1]{#1}
\csname url@samestyle\endcsname
\providecommand{\newblock}{\relax}
\providecommand{\bibinfo}[2]{#2}
\providecommand{\BIBentrySTDinterwordspacing}{\spaceskip=0pt\relax}
\providecommand{\BIBentryALTinterwordstretchfactor}{4}
\providecommand{\BIBentryALTinterwordspacing}{\spaceskip=\fontdimen2\font plus
\BIBentryALTinterwordstretchfactor\fontdimen3\font minus
  \fontdimen4\font\relax}
\providecommand{\BIBforeignlanguage}[2]{{%
\expandafter\ifx\csname l@#1\endcsname\relax
\typeout{** WARNING: IEEEtran.bst: No hyphenation pattern has been}%
\typeout{** loaded for the language `#1'. Using the pattern for}%
\typeout{** the default language instead.}%
\else
\language=\csname l@#1\endcsname
\fi
#2}}
\providecommand{\BIBdecl}{\relax}
\BIBdecl

\bibitem{Gem84}
S.~Geman and D.~Geman, ``Stochastic relaxation, {G}ibbs distributions, and the
  {B}ayesian restoration of images,'' \emph{IEEE Trans. Patt. Analysis and
  Mach. Intel.}, no.~6, pp. 721--741, 1984.

\bibitem{Gla63}
R.~J. Glauber, ``Time-dependent statistics of the {I}sing model,'' \emph{J.
  Math. Phys.}, vol.~4, no.~2, pp. 294--307, 1963.

\bibitem{Dur98}
R.~Durbin, S.~R. Eddy, A.~Krogh, and G.~Mitchison, \emph{Biological sequence
  analysis: Probabilistic models of proteins and nucleic acids}.\hskip 1em plus
  0.5em minus 0.4em\relax Cambridge Univ. Press, 1998.

\bibitem{Man99}
C.~D. Manning and H.~Sch{\"u}tze, \emph{Foundations of statistical natural
  language processing}.\hskip 1em plus 0.5em minus 0.4em\relax MIT press, 1999.

\bibitem{Was94}
S.~Wasserman and K.~Faust, \emph{Social network analysis: Methods and
  applications}.\hskip 1em plus 0.5em minus 0.4em\relax Cambridge Univ. Press,
  1994, vol.~8.

\bibitem{Chi96}
D.~M. Chickering, ``Learning {B}ayesian networks is {NP}-complete,'' in
  \emph{Learning from data}.\hskip 1em plus 0.5em minus 0.4em\relax Springer,
  1996, pp. 121--130.

\bibitem{Ree12}
G.~Reeves and M.~Gastpar, ``The sampling rate-distortion tradeoff for sparsity
  pattern recovery in compressed sensing,'' \emph{IEEE Trans. Inf. Theory},
  vol.~58, no.~5, pp. 3065--3092, May 2012.

\bibitem{Sca15b}
J.~Scarlett and V.~Cevher, ``Phase transitions in group testing,'' in
  \emph{Proc. ACM-SIAM Symp. Disc. Alg. (SODA)}, 2016.

\bibitem{San12}
N.~Santhanam and M.~Wainwright, ``Information-theoretic limits of selecting
  binary graphical models in high dimensions,'' \emph{IEEE Trans. Inf. Theory},
  vol.~58, no.~7, pp. 4117--4134, July 2012.

\bibitem{Sha14}
K.~Shanmugam, R.~Tandon, A.~Dimakis, and P.~Ravikumar, ``On the information
  theoretic limits of learning {I}sing models,'' in \emph{Adv. Neur. Inf. Proc.
  Sys. (NIPS)}, 2014.

\bibitem{Isi25}
E.~Ising, ``Beitrag zur theorie des ferromagnetismus,'' \emph{Zeitschrift
  f{\"u}r Physik A Hadrons and Nuclei}, vol.~31, no.~1, pp. 253--258, 1925.

\bibitem{Ana12}
A.~Anandkumar, V.~Y.~F. Tan, F.~Huang, and A.~S. Willsky, ``High-dimensional
  structure estimation in {I}sing models: Local separation criterion,''
  \emph{Ann. Stats.}, vol.~40, no.~3, pp. 1346--1375, 2012.

\bibitem{Ana12a}
------, ``High-dimensional {G}aussian graphical model selection: Walk
  summability and local separation criterion,'' \emph{J. Mach. Learn. Res.},
  vol.~13, pp. 2293--2337, 2012.

\bibitem{Bre08}
G.~Bresler, E.~Mossel, and A.~Sly,
  ``\BIBforeignlanguage{English}{Reconstruction of {M}arkov random fields from
  samples: Some observations and algorithms},'' in
  \emph{\BIBforeignlanguage{English}{Appr., Rand. and Comb. Opt. Algorithms and
  Techniques}}.\hskip 1em plus 0.5em minus 0.4em\relax Springer Berlin
  Heidelberg, 2008, pp. 343--356.

\bibitem{Wu13}
R.~Wu, R.~Srikant, and J.~Ni, ``Learning loosely connected {M}arkov random
  fields,'' \emph{Stoch. Sys.}, vol.~3, no.~2, pp. 362--404, 2013.

\bibitem{Jal11}
A.~Jalali, C.~C. Johnson, and P.~K. Ravikumar, ``On learning discrete graphical
  models using greedy methods,'' in \emph{Adv. Neur. Inf. Proc. Sys. (NIPS)},
  2011.

\bibitem{Ray12}
A.~Ray, S.~Sanghavi, and S.~Shakkottai, ``Greedy learning of graphical models
  with small girth,'' in \emph{Allteron Conf. Comm., Control, and Comp.}, 2012.

\bibitem{Bre14}
G.~Bresler, D.~Gamarnik, and D.~Shah, ``Structure learning of antiferromagnetic
  {I}sing models,'' in \emph{Adv. Neur. Inf. Proc. Sys. (NIPS)}, 2014.

\bibitem{Bre14a}
G.~Bresler, ``Efficiently learning {I}sing models on arbitrary graphs,'' in
  \emph{ACM Symp. Theory Comp. (STOC)}, 2015.

\bibitem{Rav10}
P.~Ravikumar, M.~J. Wainwright, J.~D. Lafferty, and B.~Yu, ``High-dimensional
  {I}sing model selection using $\ell_1$-regularized logistic regression,''
  \emph{Ann. Stats.}, vol.~38, no.~3, pp. 1287--1319, 2010.

\bibitem{Yan14}
E.~Yang, A.~C. Lozano, and P.~K. Ravikumar, ``Elementary estimators for
  graphical models,'' in \emph{Adv. Neur. Inf. Proc. Sys. (NIPS)}, 2014, pp.
  2159--2167.

\bibitem{Mon09}
A.~Montanari and J.~A. Pereira, ``Which graphical models are difficult to
  learn?'' in \emph{Adv. Neur. Inf. Proc. Sys. (NIPS)}, 2009.

\bibitem{Tan13}
R.~Tandon and P.~Ravikumar, ``On the difficulty of learning power law graphical
  models,'' in \emph{IEEE Int. Symp. Inf. Theory}, 2013.

\bibitem{Das12}
A.~K. Das, P.~Netrapalli, S.~Sanghavi, and S.~Vishwanath, ``Learning {M}arkov
  graphs up to edit distance,'' in \emph{IEEE Int. Symp. Inf. Theory}, 2012,
  pp. 2731--2735.

\bibitem{Vat11}
D.~Vats and J.~M. Moura, ``Necessary conditions for consistent set-based
  graphical model selection,'' in \emph{IEEE Int. Symp. Inf. Theory}, 2011, pp.
  303--307.

\bibitem{Mei06}
N.~Meinshausen and P.~B\"uhlmann, ``High-dimensional graphs and variable
  selection with the {L}asso,'' \emph{Ann. Stats.}, vol.~34, no.~3, pp.
  1436--1462, June 2006.

\bibitem{Wan10}
W.~Wang, M.~Wainwright, and K.~Ramchandran, ``Information-theoretic bounds on
  model selection for {G}aussian {M}arkov random fields,'' in \emph{IEEE Int.
  Symp. Inf. Theory}, 2010.

\bibitem{Jog15}
V.~Jog and P.-L. Loh, ``On model misspecification and {KL} separation for
  {G}aussian graphical models,'' in \emph{IEEE Int. Symp. Inf. Theory}, 2015.

\bibitem{Rav11}
P.~Ravikumar, M.~J. Wainwright, G.~Raskutti, and B.~Yu, ``High-dimensional
  covariance estimation by minimizing $\ell_1$-penalized log-determinant
  divergence,'' \emph{Elec. J. Stats.}, vol.~5, pp. 935--980, 2011.

\bibitem{Cov01}
T.~M. Cover and J.~A. Thomas, \emph{Elements of Information Theory}.\hskip 1em
  plus 0.5em minus 0.4em\relax John Wiley \& Sons, Inc., 2006.

\bibitem{Ree13}
G.~Reeves and M.~Gastpar, ``Approximate sparsity pattern recovery:
  Information-theoretic lower bounds,'' \emph{IEEE Trans. Inf. Theory},
  vol.~59, no.~6, pp. 3451--3465, June 2013.

\end{thebibliography}

\begin{IEEEbiographynophoto}{Jonathan Scarlett}
(S'14 -- M'15) received 
the B.Eng. degree in electrical engineering and the B.Sci. degree in 
computer science from the University of Melbourne, Australia. In 2011, 
he was a research assistant at the Department of Electrical \& Electronic 
Engineering, University of Melbourne.  From October 2011 to August 2014,  
he was a Ph.D. student in the Signal Processing and Communications Group
at the University of Cambridge, United Kingdom. He
is now a post-doctoral researcher with the Laboratory for Information
and Inference Systems at the \'Ecole Polytechnique F\'ed\'erale de Lausanne,
Switzerland.  His research interests are in the areas of information theory, 
signal processing, machine learning, and high-dimensional statistics. 
He received the Cambridge Australia Poynton International Scholarship, and the EPFL Fellows postdoctoral fellowship co-funded by Marie Curie.
\end{IEEEbiographynophoto}

\begin{IEEEbiographynophoto}{Volkan Cevher}
(SM'10) received the B.Sc. (valedictorian)
in electrical engineering from Bilkent
University in Ankara, Turkey, in 1999 and the Ph.D.
in electrical and computer engineering from the
Georgia Institute of Technology in Atlanta, GA in
2005. He was a Research Scientist with the University
of Maryland, College Park from 2006-2007 and also
with Rice University in Houston, TX, from 2008-2009.
Currently, he is an Associate Professor at the
Swiss Federal Institute of Technology Lausanne and
a Faculty Fellow in the Electrical and Computer
Engineering Department at Rice University. His research interests include signal
processing theory, machine learning, convex optimization, and information
theory. Dr. Cevher was the recipient of a Best Paper Award at SPARS in 2009,
a Best Paper Award at CAMSAP in 2015, and an ERC StG in 2011.
\end{IEEEbiographynophoto}

\end{document}